\newcommand{\remove}[1]{{}}  
\newcommand{\calx}{\mathcal{X}}
\newcommand{\caly}{\mathcal{Y}}
\newcommand{\calz}{\mathcal{Z}}
\newcommand{\calp}{\mathcal{P}}
\newcommand{\dx}{{d_{\scriptscriptstyle\calx}}}
\newcommand{\dy}{{d_{\scriptscriptstyle\caly}}}
\newcommand{\euclid}{{d_2}}
\newcommand{\hamming}{{d_h}}
\newcommand{\dprob}{{d_\calp}}
\newcommand{\priv}[1]{$#1$-privacy}
\newcommand{\privadj}[1]{$#1$-private}
\newcommand{\edp}[2][\epsilon]{$#1#2$-privacy}
\newcommand{\edpadj}[2][\epsilon]{$#1#2$-private}
\newcommand{\geoind}{geo-indistingui\-sha\-bility}
\newcommand{\smallsum}[1]{\textstyle{\sum_{#1}\:}}
\newcommand{\dspan}{$\delta$-spanner\xspace}
\newcommand{\expdist}{\textsc{ExpDist}}
\newcommand{\expdistarg}[3]{\expdist(#1, #2, #3)}
\newcommand{\sql}{\textsc{QL}}
\newcommand{\sqlarg}[3]{\sql(#1, #2, #3)}
\newcommand{\adv}{\textsc{AdvError}}
\newcommand{\advarg}[3]{\adv(#1, #2, #3)}
\newcommand{\plap}{\textsc{PL}}
\newcommand{\eeop}{\textsc{OptPriv}}
\newcommand{\giop}{\textsc{OptQL}}
\newcommand{\dq}{d_{Q}}
\newcommand{\da}{d_{A}}
\newcommand{\dpopt}[3]{$#2$-$\giop(#1, #3)$} 
\newcommand{\dpoptbare}[1]{$#1$-$\giop$} 
\newcommand{\sqlopt}[3]{\optpriv{#1}{#2}{#3}{q}}     
\newcommand{\optpriv}[4]{$#4$-$\eeop(#1, #2, #3)$} 
\newcommand{\optprivbare}[1]{$#1$-$\eeop$} 
\newcommand{\uprof}{\pi}
\newcommand{\prior}{\pi}
\newcommand{\edx}{\epsilon\dx}
\newcommand{\edg}{\frac{\epsilon}{\delta}d_G}
\newtheorem{definition}{Definition}
\newtheorem{theorem}{Theorem}
\author{
\alignauthor
	Nicol\'as E. Bordenabe\\
    \affaddr{INRIA and \'Ecole Polytechnique}\\
    \email{\normalsize nbordenabe@lix.polytechnique.fr}
\alignauthor
	Konstantinos Chatzikokolakis\\
    \affaddr{CNRS and \'Ecole Polytechnique}\\
    \email{\normalsize kostas@lix.polytechnique.fr}
\and
\alignauthor 
	Catuscia Palamidessi\\
    \affaddr{INRIA and \'Ecole Polytechnique}\\
    \email{\normalsize catuscia@lix.polytechnique.fr}
}
\newfont{\mycrnotice}{ptmr8t at 7pt}
\newfont{\myconfname}{ptmri8t at 7pt}
\def\@copyrightspace{\relax}
\begin{document}

\pagestyle{plain}

\title{Optimal Geo-Indistinguishable Mechanisms for\\Location Privacy}

\maketitle

\begin{abstract}
We consider the geo-indistinguishability approach to location privacy, and the
trade-off with respect to utility. We show that, given a  desired degree of
geo-indistinguishability, it is possible  to construct a mechanism that
minimizes the service quality loss, using linear programming techniques. In
addition we show that, under certain conditions, such mechanism also provides optimal
privacy in the sense of Shokri et al. Furthermore, we propose a method to reduce the
number of constraints of the linear program from cubic to quadratic, maintaining
the privacy guarantees and without affecting significantly the utility of the
generated mechanism. This reduces considerably the time required to solve the
linear program, thus enlarging significantly the location sets for which the
optimal mechanisms can  be computed.
\end{abstract}

\category{C.2.0}{Computer--Communication Networks}{General}[Security and protection]
\category{K.4.1}{Computers and Society}{Public Policy Issues}[Privacy]

\keywords{
Location privacy;
Location obfuscation;
Geo-indistinguisha-bility;
Differential privacy;
Linear optimization
}


\section{Introduction}

While location-based systems (LBSs) have demonstrated to provide enormous benefits to individuals and society, these benefits come at the cost of users' privacy: as discussed
in  \cite{Freudiger:11:FC,Golle:09:PC,Krumm:07:PC}, location data can be easily linked to a variety of other information about an individual, and expose sensitive aspects of her private life such as her home address, her political views, her religious practices, etc..  
There is, therefore, a growing interest  in the development of  location-privacy protection mechanisms (LPPMs),  that allow  to use LBSs while providing sufficient privacy guarantees for the user. Most of the approaches in the literature are based on perturbing  the information reported to the  LBS provider, so to prevent the disclosure of the user's  location  \cite{Beresford:03:Perv,Chow:09:WPES,Freudiger:09:PETS,Hoh:07:CCS,Shokri:12:CCS,Andres:13:CCS}.  

Clearly, the perturbation of the information sent to the LBS provider  leads to
a degradation of the quality of service, and consequently there is a trade-off
between the level of privacy that the user wishes to guarantee and the  service
quality loss (QL) that she will have to accept. The study of this trade-off, and
the  design of mechanisms which optimize it, is an important research direction
started with the seminal paper of Shroki et al. \cite{Shokri:11:SP}. 

Obviously, any such study must be based on meaningful  notions of privacy and
of quality loss. The  authors of \cite{Shokri:11:SP} consider the privacy
threats  deriving from a Bayesian adversary. More specifically, they assume that
the adversary knows the prior probability  distribution on the user's possible
locations, and they quantify privacy as the expected error, namely the expected
distance between the true location and the best guess of the adversary once she
knows the location  reported to the LBS. We refer to this quantity as $\adv$.
The adversary's guess takes into account the
information already in her possession (the prior probability), and it is by
definition more accurate, in average, than the reported location.  We also say
that the adversary may \emph{remap} the reported location.

The notion of quality loss adopted in \cite{Shokri:12:CCS} is also defined  in terms of the expected distance between the real location and the reported location, with the important difference that the LBS is  not assumed to know the user's prior distribution
(the LBS is not tuned for any specific user), and consequently it does not apply any remapping. 
Note that the notion of distance used for expressing  QL does not need to be the same as the one used to measure location privacy. 
When these two notions coincide, then  QL is always greater than or equal to the location privacy, due to the fact that the adversary can make use of the prior information to her advantage. 
The optimal mechanism of \cite{Shokri:12:CCS} is defined as the one which maximizes  privacy for a given QL threshold, 
and since these measures are linear functions of the noise (characterized by the conditional probabilities of each   reported location given a true location), such mechanism can be computed by solving a linear optimization problem. 

In this paper, we consider the geo-indistinguishability fra\-mework of
\cite{Andres:13:CCS}, a notion of location privacy based on differential
privacy~\cite{Dwork:06:TCC}, and more precisely, on its extension to arbitrary metrics proposed in \cite{Chatzikokolakis:13:PETS}. Intuitively, a mechanism provides geo-indistinguishability if two locations that are geographically close have similar probabilities to generate a certain reported location. Equivalently,  the reported location will not increase by much the adversary's  chance to distinguish the true location among the nearby ones.
Note that this notion protects the accuracy of the location:  the adversary is allowed to distinguish locations which are far away.  
It is important to note that the property of geo-indistinguishability does not depend on the prior.  
This is a  feature inherited from differential privacy,  which makes the
mechanism robust with respect to composition of attacks in the same sense as differential privacy.

We study the problem of optimizing the trade-off between geo-indistinguishability and quality of service. 
More precisely, given a certain threshold on the degree of geo-indistin-guishability, and a prior, we aim at obtaining the mechanism $K$ which minimizes 
QL.  Thanks to the fact that the property of respecting the geo-indistinguishability threshold can be expressed 
by linear constraints, we can reduce the problem of producing such a  $K$ to a linear optimization problem, which can then be solved by using standard techniques of linear programming. 

It should be remarked that our approach is, in a sense, dual wrt the one of
\cite{Shokri:12:CCS}. The latter fixes a bound on  QL and optimizes the location
privacy. Here, on the contrary, we fix a bound on the location privacy and then
optimize  QL. Another important difference is that in \cite{Shokri:12:CCS} the
privacy degree of the optimal mechanism, measured by $\adv$, is  guaranteed for a specific prior
only, while in our approach the privacy guarantee of the optimal mechanism is in
terms of geo-indistinguihability, which does not depend on the prior. In our
opinion, this is an important feature of the present approach, as it is
difficult to control the prior knowledge of the adversary.  Consider, for
instance,  a  user for which the optimal mechanism has been computed with
respect to his average day (and consequent prior $\pi$), and who has very
different habits in the morning and in the afternoon. By simply taking into
account the time of the day, the adversary gains some additional knowledge that
determines  a different prior, and the privacy guarantees of the optimal
mechanism of \cite{Shokri:12:CCS}  can be severely violated  when the adversary
uses a prior different from $\pi$. 

However, when the notion of distance used to
measure the QL coincides with that used for expressing the degree of
privacy according to $\adv$, then, somewhat surprisingly, our optimal mechanism
$K$ turns out to be also optimal in terms of
$\adv$, in a sense getting the best of both approaches. 
Intuitively, this is due to the fact that the property of  geo-indistinguishability is not affected by remapping. 
Hence, the expected error of the adversary must coincide with  QL, i.e., the adversary cannot gain anything by any remapping $H$, or otherwise  
$KH$ would be still geo-indistinguishable and provide a better QL. Since privacy
coincides with the QL, it must also be optimal.
In conclusion, we obtain a geo-indistinguishable $K$ with minimum  QL and
maximum degree of privacy (for that QL). 

Note that the optimal mechanisms are not unique, and ours does not usually coincide with the one produced by the algorithm of \cite{Shokri:12:CCS}. 
In particular the one of  \cite{Shokri:12:CCS} in general does not  
 provide  geo-indistinguishability, while ours does, by design. 
The robustness of the  geo-indistinguishability property seems to affect favorably also other notions of privacy: 
We have evaluated the two mechanisms with the privacy definition of
\cite{Shokri:12:CCS} on two real datasets, and we have observed that,
while the mechanism of  \cite{Shokri:12:CCS} by definition offers the best privacy on the prior for which it is computed, 
ours can perform significantly better when we consider different priors. 

We now turn our attention to efficiency concerns.  
Since the optimal mechanism is obtained by solving a linear optimization problem, the efficiency depends crucially on the number of constraints  used to express geo-indistinguishability. We note that this number is, in general, cubic with respect to the amount of locations considered. 
We show that we are able to reduce this number from cubic to quadratic, using an approximation technique  
based on constructing a suitable spanning graph of the set of locations. 
The idea is that, instead of considering the geo-indistinguishability constraints for every pair of locations, we only consider those for every edge in the spanning graph. 
We also show, based on experimental results, that for a reasonably good approximation our approach offers an improvement in running time with respect to method of Shokri et al.
We must note however that the mechanism obtained this way is no longer optimal with respect to the original metric, but only with respect to the metric induced by the graph, 
and therefore the $\sql$ of the mechanism might be higher, although our experiments also show that this increase is not significant.

Note that in this paper we focus on the case of \emph{sporadic} location disclosure, 
that is, we assume that there is enough time between consecutive locations reported by the user, and therefore they can be considered independent.   
Geo-indistinguishability can be applied also in case of correlation between consecutive points,  but additional care must be taken to avoid 
the degradation of privacy, that could be significant when the number of consecutive locations is high.
The problem of correlation is orthogonal to to the goals of this paper. 
We refer to \cite{Chatzikokolakis:14:PETS} for a study of this problem.

\paragraph{Contribution} The main contributions of this paper are the following:
\begin{itemize}
\item We present a method based on  linear optimization   to generate a mechanism that is geo-indistinguishable and achieves optimal utility. 
Furthermore when the notions of distance used for QL coincide with  that used for geo-indistinguishability, then the mechanism is also optimal with respect to the expected error of the adversary.
\item We evaluate our approach under different priors (generated from real
	traces of two widely used datasets), and show that it outperforms the other mechanisms considered.
\item We propose an approximation  technique, based on spanning graphs, that can be used to reduce the number of constraints of the optimization problem and still obtain a geo-indistinguishable mechanism.
\item We measure the impact of the approximation on the utility and the number of constraints, and analyze the running time of the whole method, obtaining favorable results.
\end{itemize}

\paragraph{Plan of the paper}
The rest of the paper is organized as follows.
Next section recalls some preliminary notions. 
In Section~\ref{mechanism} we illustrate our method to 
produce a geo-indistinguishable and optimal mechanism
as the solution of a linear optimization problem, and we 
propose a technique to reduce the number of constraints 
used in the problem. 
In Section \ref{sec:evaluation} we evaluate our mechanism
with respect to  other ones in the literature.
Finally, in Section \ref{related}, we discuss 
 related work and   conclude.

\bigskip
This paper is the report version of a work that appeared in the proceedings of the 21st ACM Conference on Computer Security. Scottsdale, Arizona, USA, Nov. 2014 (CCS'14).


\section{Preliminaries}




\subsection{Location obfuscation, quality loss and adversary's error}
\label{sec:rezas-mech}

A common way of achieving location privacy is to apply a \emph{location
obfuscation} mechanism, that is a probabilistic function
$K:\calx\to\calp(\calx)$ where $\calx$ is the set of possible locations,
and $\calp(\calx)$ denotes the set of probability distributions over $\calx$.
$K$ takes a location $x$ as input, and produces a \emph{reported location}
$z$ which is communicated to the service provider. In this paper we generally
consider $\calx$ to be finite, in which case $K$ can be represented by a
stochastic matrix, where $k_{xz}$ is the probability to report $z$ from
location $x$.

A prior distribution $\prior \in\calp(\calx)$ on the set of locations
can be viewed either as modelling the behaviour of the user (the \emph{user
profile}), or as capturing the adversary's \emph{side information} about the user.
Given a prior $\pi$ and a metric $d$ on $\calx$, the expected distance between the real
and the reported location is:
\[
	\expdistarg{K}{\prior}{d} = \smallsum{x, z} \pi_x k_{xz} d(x, z)
\]

From the user's point of view, we want to quantify the service \emph{quality
loss (QL)} produced by the mechanism $K$. Given a \emph{quality metric} $\dq$ on
locations, such that $\dq(x,z)$ measures how much the quality decreases by reporting
$z$ when the real location is $x$ (the Euclidean metric $\euclid$ being a
typical choice), we can naturally define the quality loss as the expected
distance between the real and the reported location, that is 
$
	\sqlarg{K}{\prior}{\dq} = \expdistarg{K}{\prior}{\dq}
$.
The QL can also be viewed as the (inverse of the) utility of the
mechanism.

Similarly, we want to quantify the \emph{privacy} provided by $K$. A natural
approach, introduced in \cite{Shokri:11:SP} is to consider a Bayesian adversary
with some prior information $\pi$, trying to remap $z$ back to a guessed
location $\hat{x}$. A remapping strategy can be modelled by a stochastic matrix
$H$, where $h_{z\hat{x}}$ is the probability to map $z$ to $\hat{x}$. Then the
privacy of the mechanism can be defined as the expected error of an adversary
under the best possible remapping:
\[
	\advarg{K}{\pi}{\da} = \min_H \expdistarg{KH}{\prior}{\da}
\]
Note that the composition $KH$ of $K$ and $H$ is itself a mechanism.
Similarly to $\dq$, the metric $\da(x,\hat{x})$ captures the adversary's
loss when he guesses $\hat{x}$ while the real location is $x$. Note that
$\dq$ and $\da$ can be different, but the canonical choice is to use the Euclidean
distance for both.

A natural question, then, is to construct a mechanism that achieves \emph{optimal
privacy}, given a \emph{QL constraint}.
\begin{definition}
\label{def:sql-opt}
	Given a prior $\prior$, a quality
	metric $\dq$, a quality bound $q$ and an adversary metric $\da$, a mechanism $K$ is
	\optpriv{\prior}{\da}{\dq}{q} iff
	\begin{enumerate}
		\item $\sqlarg{K}{\prior}{\dq} \le q$, and
		\item for all mechanisms $K'$, $\sqlarg{K'}{\prior}{\dq} \le q$ implies
		$
		\advarg{K'}{\prior}{\da} \le \advarg{K}{\prior}{\da}
		$
	\end{enumerate}
\end{definition}
In other words, a \optprivbare{q} mechanism 
provides the best privacy (expressed in terms of
$\adv$) among all mechanisms with $\sql$ at most $q$.
This problem was studied in \cite{Shokri:12:CCS}, providing a method
to construct such a mechanism for any $q,\pi,\da,\dq$, by solving a properly
constructed linear program.

\subsection{Differential privacy}

Differential privacy was originally introduced in the context of statistical
databases, requiring that a query should produce similar results when applied
to \emph{adjacent} databases, i.e. those differing by a single row.
The notion of adjacency is related to the Hamming metric $\hamming(x,x')$ defined
as the number of rows in which $x,x'$ differ. Differential privacy requires that
the greater the hamming distance between $x,x'$ is, the more distinguishable
they are allowed to be.

This concept can be naturally extended to any set of secrets $\calx$, equipped
with a metric $\dx$ \cite{Reed:10:ICFP,Chatzikokolakis:13:PETS}.
The distance $\dx(x,x')$ expresses the \emph{distinguishability level} between
$x$ and $x'$: if the distance is small then the secrets should remain
indistinguishable, while secrets far away from each other are allowed to
be distinguished by the adversary. The metric should be chosen depending on
the application at hand and the semantics of the privacy notion that we try to
achieve.

Following the notation of \cite{Chatzikokolakis:13:PETS}, a mechanism is a
probabilistic function $K:\calx \to\calp(\calz)$, where $\calz$ is a set of
\emph{reported values} (assumed finite for the purposes of this paper). The
similarity between probability distributions can be
measured by the multiplicative distance $\dprob$ defined as $ d_{\cal
P}(\mu_1,\mu_2) = \sup_{z \in \calz} |\ln \frac{\mu_1(z)}{\mu_2(z)}| $
with $|\ln \frac{\mu_1(z)}{\mu_2(z)}|=0$ if both $\mu_1(z),\mu_2(z)$ are zero
and $\infty$ if only one of them is zero. In other words,
$\dprob(\mu_1,\mu_2)$ is small iff $\mu_1,\mu_2$ assign similar probabilities to each
value $z$.

The generalized variant of differential privacy under the metric $\dx$, called
\priv{\dx}, is defined as follows:
\begin{definition}
  A mechanism $K : \calx\rightarrow \calp(\calz)$ satisfies
  \priv{\dx} iff:
  \[
	  \dprob(K(x),K(x')) \leq \dx(x,x')
		\qquad \forall x,x' \in \calx
  \]
\end{definition}
or equivalently
$
	 K(x)(z) \leq e^{\dx (x,x')} K(x')(z)
$
for all $x,x'\in\calx,z\in\calz$.
A privacy parameter $\epsilon$ can also be introduced by scaling
the metric $\dx$ (note that $\epsilon\dx$ is itself a metric).

Differential privacy can then be expressed as \priv{\epsilon\hamming}. Moreover,
different metrics give rise to various privacy notions of interest;
several examples are given in \cite{Chatzikokolakis:13:PETS}.

\subsection{Geo-indistinguishability}
\label{sec:geoind}

In the context of location based systems the secrets $\calx$ are locations, and
we can obtain a useful notion of location privacy by naturally using
the Euclidean distance $\euclid$, scaled by a security parameter $\epsilon$.
The resulting notion of \priv{\epsilon\euclid}, called $\epsilon$-geo-indistinguishability
in \cite{Andres:13:CCS}, requires that a location obfuscation mechanism should
produce similar results when applied to locations that are geographically
close. This prevents the service provider from inferring the user's location
with accuracy, while allowing him to get approximate information required to
provide the service. Following the spirit of differential privacy, this
definition is independent from the prior information of the adversary.

A characterization of geo-indistinguishability from \cite{Andres:13:CCS}
provides further intuition about this notion. The characterization compares the
adversary's conclusions (a posterior distribution) to his initial knowledge (a
prior distribution). Since some information is supposed to be revealed (i.e. the
provider will learn that the user is somewhere around Paris), we cannot expect
the two distributions to coincide. However, geo-indistinguishability implies
that an \emph{informed adversary} who already knows that the user is located
within a small area $N$, cannot improve his initial knowledge and locate the
user with higher accuracy. More details, together with a second characterization
can be found in \cite{Andres:13:CCS}.

Note that geo-indistinguishability does not guarantee a small leakage under any
prior; in fact no obfuscation mechanism can ensure this while offering some
utility. Consider, for instance, an adversary who knows that the user is located
at some airport, but not which one. Unless the noise is huge, reporting an
obfuscated location will allow the exact location to be inferred, but
this is unavoidable.\footnote{This example is the counterpart of the well-known Terry
Gross example from \cite{Dwork:06:TCC}}.

Considering the mechanism, \cite{Andres:13:CCS} shows that
\geoind{} can be achieved by adding noise to the user's location
drawn from a 2-dimensional Laplace distribution. This can be easily done in
polar coordinates by selecting and angle uniformly and a radius from a Gamma
distribution. If a restricted set of reported locations is allowed, then the
location produced by the mechanism can be mapped back to the closest among the
allowed ones.

Although the Laplace mechanism provides an easy and practical way of achieving
geo-indistinguishability, independently from any user profile, its utility
is not always optimal. In the next section we show that by tailoring a mechanism
to a prior corresponding to a specific user profile, we can achieve better
utility for that prior, while still satisfying geo-indistinguishability, i.e. a
privacy guarantee independent from the prior. The evaluation results in
Section~\ref{sec:evaluation} show that the optimal mechanism can provide
substantial improvements compared to the Laplace mechanism.


\section{Geo-indistinguishable \\mechanisms of optimal utility}\label{mechanism}

As discussed in the introduction, we aim at obtaining a mechanism that optimizes the tradeoff 
between privacy (in terms of geo-indistinguishability) and quality loss (in terms the metric $\sql$). 
Our main goal is, given a set of locations $\calx$ with a privacy metric $\dx$ (typically the
Euclidean distance), a privacy level $\epsilon$, a user profile $\uprof$ and a
quality metric $\dq$, to find an \edpadj{\dx} mechanism such that its $\sql$ is
as small as possible.

We start by describing a set of linear constraints that enforce \edp{\dx}, which
allows to obtain an optimal mechanism as a
linear optimization problem.
However, the number of constraints can be large, making the approach
computationally demanding as the number of locations increases.
As a consequence, we propose an approximate solution that replaces $\dx$ with the metric induced by
a spanning graph.
We discuss a greedy algorithm to calculate the spanning graph and analyze its
running time.  We also show that, if the quality and adversary metrics
coincide, then the constructed (exact or approximate) mechanisms
also provide optimal privacy in terms of $\adv$. Finally, we discuss some
practical considerations of our approach.

\subsection{Constructing an optimal mechanism}
\label{sec:approach}



The constructed mechanism is assumed to have as both input and output a
predetermined finite set of locations $\calx$. For instance, $\calx$ can be
constructed by dividing the map in a finite number of regions (of arbitrary size
and shape), and selecting in $\calx$ a representative location for each region.
We also assume a prior $\uprof$ over $\calx$, representing the
probability of the user being at each location at any given time. 

Given a privacy metric $\dx$ (typically the
Euclidean distance) and a privacy parameter $\epsilon$, the goal is to construct
a \edpadj{\dx} mechanism $K$ such that the
\emph{service quality loss} with respect to a quality metric $\dq$ is minimum.
This property is formally defined below:

\begin{definition}
\label{def:dx-opt}
	Given a prior $\prior$, a privacy metric $\dx$, a privacy parameter
	$\epsilon$ and a quality metric $\dq$, a mechanism $K$ is
	\dpopt{\prior}{\epsilon\dx}{\dq} iff:
	\begin{enumerate}
		\item $K$ is \privadj{\epsilon\dx}, and
		\item for all mechanisms $K'$, if $K'$ is \privadj{\epsilon\dx} then\\
		$
		\sqlarg{K}{\prior}{\dq} \leq \sqlarg{K'}{\prior}{\dq}
		$
	\end{enumerate}
\end{definition}
Note that \dpoptbare{\epsilon\dx} optimizes $\sql$ given a privacy constraint,
while \optprivbare{q} (Definition~\ref{def:sql-opt}) optimizes
privacy, given an $\sql$ constraint.

In order for $K$ to be \edpadj{\dx}
it should satisfy the following constraints:
\[
\begin{array}{l c l}
k_{xz} \leq e^{\epsilon \dx(x,x')} k_{x'z} & \quad \quad &  x, x', z \in \calx \\
\end{array}
\]
Hence, we can construct an optimal mechanism by solving a linear optimization problem,
minimizing $\sqlarg{K}{\uprof}{\dq}$ while satisfying \edp{\dx}:
\begin{align*}
	\textbf{Minimize:}   &\quad \sum_{x, z \in \calx} \uprof_x k_{xz} \dq(x, z) \\
	\textbf{Subject to:} 
			&\quad k_{xz} \leq e^{\epsilon \dx(x,x')} k_{x'z}  &x,x',z\in\calx \\
			&\quad \sum_{z \in \calx} k_{xz} = 1  &x\in \calx \\
			&\quad k_{xz} \geq 0 & x, z \in \calx
\end{align*}

It is easy to see that the mechanism $K$ generated by the previous optimization problem is \dpopt{\uprof}{\edx}{\dq}.


\subsection{A more efficient method using spanners}
\label{sec:mechanism-spanner}

In the optimization problem of the previous section, the \edp{\dx} definition introduces $|\calx|^3$ constraints in the linear program.
However, in order to be able to manage a large number of locations, we would like to reduce this amount to a number in the order of $O(|\calx|^2)$. 
One possible way to achieve this is to use the \emph{dual form} of the linear
program (shown in the appendix). The dual program has as many constraints as
the variables of the primal program (in this case $|\calx|^2$) and one variable for
each constraint in the primal program (in this case $O(|\calx|^3)$). Since the
primal linear program finds the optimal solution in a finite number of steps, it
is guaranteed by the strong duality theorem that dual program will also do so.
However, as shown in Section \ref{sec:performance}, in practice the dual
program does not offer a substantial improvement with respect to the primal one
(a possible explanation being that, although fewer in number, the constrains in
the dual program are more complex, in the sense that each one of them involves a larger
number of variables).

An alternative approach is to exploit the structure of the metric $\dx$.
So far we are not making any assumption about $\dx$, and therefore we need to specify $|\calx|$ constraints for each pair of locations $x$ and $x'$. However, it is worth noting that if the distance $\dx$ is induced by a weighted graph (i.e. the distance between each pair of locations is the weight of a minimum path in a graph), then we only need to consider $|\calx|$ constraints for each pair of locations that are \emph{adjacent in the graph}. 
An example of this is the usual definition of differential privacy: since the
adjacency relation between databases induces the Hamming distance $d_h$, we only
need to require the differential privacy constraint for each pair of databases
that are adjacent in the Hamming graph (i.e. that differ in one individual). 

It might be the case, though, that the metric $\dx$ is not induced by any graph
(other than the complete graph), and consequently the amount of constraints remains the same. In fact, this is generally the case for the Euclidean metric. Therefore, we consider the case in which $\dx$ can be \emph{approximated} by some graph-induced metric.

If $G$ is an undirected weighted graph, we denote with $d_G$ the distance function induced by $G$, i.e. $d_G(x,x')$ denotes the weight of a minimum path between the nodes $x$ and $x'$ in $G$. 
Then, if the set of nodes of $G$ is $\calx$ and the weight of its edges is given by the metric $\dx$, we can approximate $\dx$ with $d_G$.
In this case, we say that $G$ is a spanning graph, or a spanner \cite{Narasimhan:07:BOOK,Sack:99:BOOK}, of $\calx$.

\begin{definition}[Spanner]
A weighted graph $G = (\calx, E)$, with $E\subseteq \calx\times\calx$ and weight function $w:E\rightarrow \mathbb{R}$ is a \emph{spanner} of $\calx$ if
\[
w(x, x') = \dx(x,x') \quad \forall (x, x')\in E
\]
\end{definition}

Note that if $G$ is a spanner of $\calx$, then
\[
d_G(x,x') \geq \dx(x,x') \quad \forall x,x'\in\calx
\]
A main concept in the theory of spanners is that of dilation, also known as stretch factor:

\begin{definition}[Dilation]
Let $G = (\calx, E)$ be a spanner of $\calx$. The \emph{dilation} of $G$ is calculated as:
\[
\delta = \max_{x\neq x' \in \calx} \frac{d_G(x,x')}{\dx(x,x')}
\]
A spanner of $\calx$ with dilation $\delta$ is called a $\delta$-\emph{spanner} of $\calx$.
\end{definition}

\begin{figure}[t]
      \centering
      \includegraphics[width=\columnwidth]{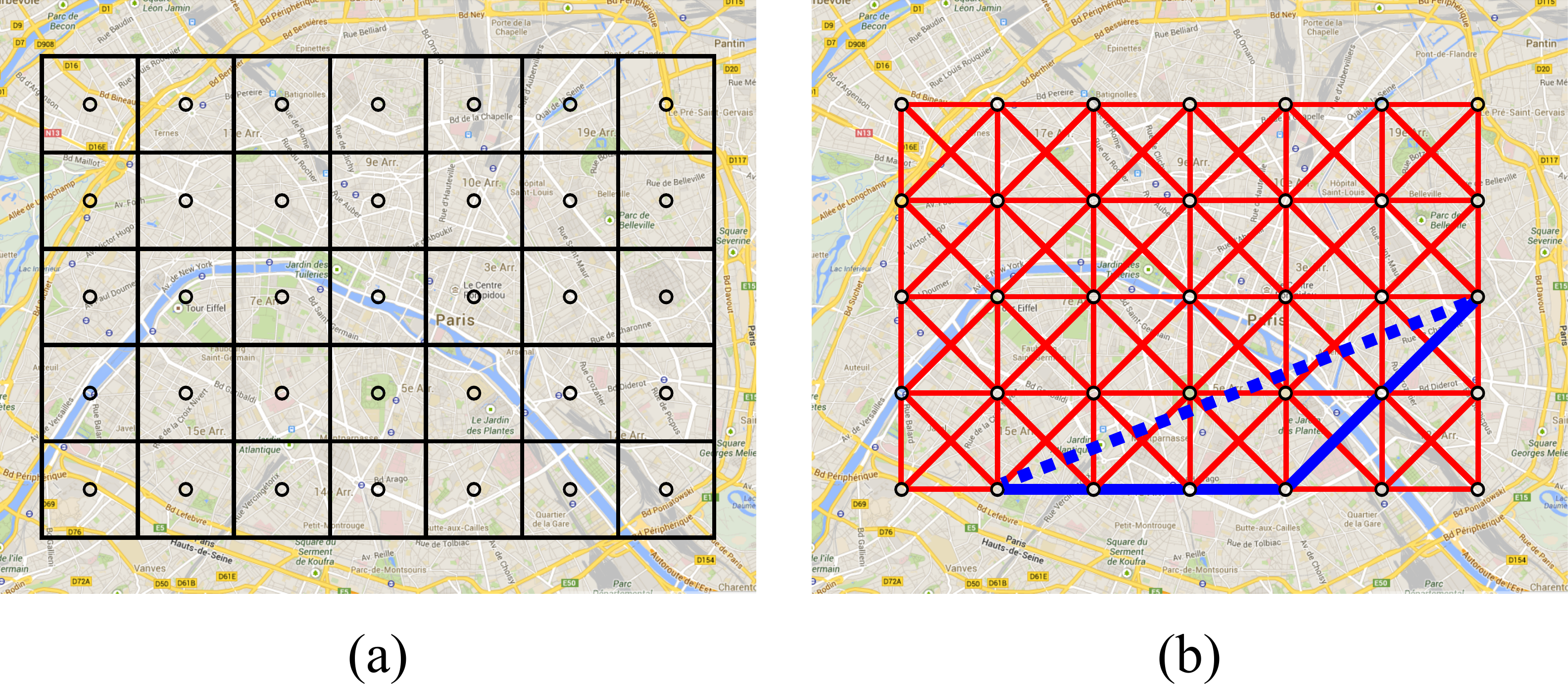}\\[1ex]
   \caption{(a) a division of the map of Paris into a $7\times 5$ square grid. The set of locations $\calx$ contains the centers of the regions. (b) A spanner of $\calx$ with dilation $\delta = 1.08$. }
   \label{fig:chart-example}
 \end{figure}

Informally, a $\delta$-spanner of $\calx$ can be considered an approximation of the metric $\dx$ in which
distances between nodes are ``stretched'' by a factor of at most $\delta$. Spanners are generally used to approximate distances in a geographic network without considering the individual distances between each pair of nodes. An example of a spanner for a grid in the map can be seen in Figure \ref{fig:chart-example}. 

If $G$ is a $\delta$-spanner of $\calx$, then it holds that
\[
d_G(x,x') \leq \delta \dx(x,x') \quad \forall x, x' \in \calx
\]
which leads to the following proposition: 

\begin{restatable}{proposition}{propprivacyimpl}
\label{prop:privacy-impl}
Let $\calx$ be a set of locations with metric $\dx$, and let $G$ be a $\delta$-spanner of $\calx$.
If a mechanism $K$ for $\calx$ is \privadj{\frac{\epsilon}{\delta}d_G}, then $K$ is \edpadj{\dx}.
\end{restatable}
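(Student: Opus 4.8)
The plan is to derive \edp{\dx} directly from \priv{\frac{\epsilon}{\delta}d_G} by a short monotonicity argument, exploiting the dilation bound recalled immediately before the statement. By definition, $K$ being \privadj{\frac{\epsilon}{\delta}d_G} means that $\dprob(K(x),K(x')) \leq \frac{\epsilon}{\delta}\,d_G(x,x')$ for every pair $x,x' \in \calx$. Since $G$ is a $\delta$-spanner of $\calx$, we have $d_G(x,x') \leq \delta\,\dx(x,x')$ for all $x,x'$, and hence $\frac{\epsilon}{\delta}\,d_G(x,x') \leq \epsilon\,\dx(x,x')$. Chaining the two inequalities yields $\dprob(K(x),K(x')) \leq \epsilon\,\dx(x,x')$ for all $x,x'$, which is precisely the condition for $K$ to be \edpadj{\dx}.

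The only points worth spelling out — none of which is a genuine obstacle — are, first, that $\frac{\epsilon}{\delta}d_G$ is indeed a metric on $\calx$ (it is a positive rescaling of the shortest-path metric of the weighted graph $G$), so that \priv{\frac{\epsilon}{\delta}d_G} is well-defined in the sense of the generalized differential privacy definition, and, second, that the $\dprob$-based privacy constraint is monotone in the reference metric: enlarging the right-hand side of an inequality of the form $\dprob(K(x),K(x')) \leq d(x,x')$ can only weaken the requirement. Both facts are immediate, so the proof is essentially the two-step inequality displayed above and requires no further computation; the entire content of the proposition is carried by the dilation bound $d_G \leq \delta\,\dx$ of the spanner.
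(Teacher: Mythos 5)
Your proof is correct and follows essentially the same route as the paper: the paper likewise combines the spanner property $d_G(x,x') \leq \delta\,\dx(x,x')$ with the monotonicity of $\dx$-privacy in the metric (which it cites from Chatzikokolakis et al.\ rather than unfolding the $\dprob$ inequality inline as you do). No gaps.
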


We can then propose a new optimization problem to obtain a \edpadj{\dx} mechanism. 
If $G=(\calx, E)$ is a \dspan of $\calx$, we require not the constraints corresponding to \edp{\dx}, but those corresponding to \priv{\edg} instead, that is, $|\calx|$ constraints for each edge of $G$:
\begin{align*}
	\textbf{Minimize:}   &\quad \sum_{x, z \in \calx} \uprof_x k_{xz} \dq(x, z) \\
	\textbf{Subject to:} 
			&\quad k_{xz} \leq e^{\frac{\epsilon}{\delta} d_G(x,x')} k_{x'z} & z \in \calx, (x,x') \in E\\
			&\quad \sum_{x \in \calx} k_{xz} = 1&  x\in \calx\\
			&\quad k_{xz} \geq 0 & x, z \in \calx
\end{align*}

Since the resulting mechanism is \privadj{\frac{\epsilon}{\delta}d_G}, by Proposition \ref{prop:privacy-impl} it must also be \edpadj{\dx}. However, the number of constraints in induced by \priv{\frac{\epsilon}{\delta}d_G} is now $|E||\calx|$. 
Moreover, as discussed in the next section, for any $\delta > 1$ there is an
algorithm that generates a \dspan with $O(\frac{|\calx|}{\delta-1})$ edges,
which means that, fixing $\delta$, the total number of constraints of the linear program is  $O(|\calx|^2)$. 

It is worth noting that although \priv{\epsilon\dx} is guaranteed, optimality
is lost:
the obtained mechanism is \dpopt{\uprof}{\edg}{\dq} but not necessarily
\dpopt{\uprof}{\edx}{\dq}, since the set of \privadj{\edg} mechanisms is a
subset of the set of \edpadj{\dx} mechanisms.  The $\sql$ of the obtained
mechanism will now depend on the dilation $\delta$ of the spanner: the smaller
$\delta$ is, the closer the QL of the mechanism will be from the optimal one. However, if $\delta$ is too small then the number of edges of the spanner will be large, and therefore the number of constraints in the linear program will increase. In fact, when $\delta=1$ the mechanism obtained is also \dpopt{\uprof}{\edx}{\dq} (since $d_G$ and $\dx$ coincide), but the amount of constraints is in general $O(|\calx|^3)$. In consequence, there is a tradeoff between the accuracy of the approximation and the number of constraints in linear program.


\subsection{An algorithm to construct a \dspan}
\label{sec:algorithm}

The previous approach requires to compute a spanner for $\calx$. Moreover, given a dilation factor $\delta$, we are interested in generating a \dspan with a reasonably small number of edges. In this section we describe a simple greedy algorithm to get a \dspan of $\calx$, presented in \cite{Narasimhan:07:BOOK}. This procedure (described in Algorithm \ref{alg:spanner}) is a generalization of 
Kruskal's minimum spanning tree algorithm. 

\begin{algorithm}
\caption{Algorithm to get a \dspan of $\calx$}\label{alg:spanner}
\begin{algorithmic}[1]
\Procedure{GetSpanner}{$\calx, \dx, \delta$}\label{alg:get-spanner:proc}
	\State $E := \emptyset$\label{alg:get-spanner:init-start}
	\State $G := (\calx, E)$\label{alg:get-spanner:init-end}
	\ForAll{$(x,x')\in (\calx\times\calx)$}\label{alg:get-spanner:mainloop-start}
		\Comment{taken in increasing order wrt $\dx$}
		\If{$d_G(x,x') > \delta \dx(x,x')$}
			\State $E := E \cup \{(x,x')\}$
		\EndIf
	\EndFor\label{alg:get-spanner:mainloop-end}
	\State \Return $G$
\EndProcedure
\end{algorithmic}
\end{algorithm}

The idea of the algorithm is the following: we start with a spanner with an empty set of edges (lines \ref{alg:get-spanner:init-start}-\ref{alg:get-spanner:init-end}). In the main loop we consider all possible edges (that is, all pairs of locations) in \emph{increasing order} with respect to the distance function $\dx$ (lines \ref{alg:get-spanner:mainloop-start}-\ref{alg:get-spanner:mainloop-end}), and if the weight of a minimum path between the two corresponding locations in the current graph is bigger than $\delta$ times the distance between them, we add the edge to the spanner. By construction, at the end of the procedure, graph $G$ is a $\delta$-spanner of $\calx$.

A crucial result presented in \cite{Narasimhan:07:BOOK} is that,  in the case where $\calx$ is a set of points in the Euclidean plane, the degree of each node in the generated spanner only depends on the dilation factor:

\begin{theorem}
Let $\delta > 1$. If $G$ is a \dspan for $\calx \subseteq \mathbb{R}^2$,  with the Euclidean distance $d_2$ as metric, then the degree of each node in the spanner constructed by Algorithm \ref{alg:spanner} is  $O(\frac{1}{\delta - 1})$. 
\end{theorem}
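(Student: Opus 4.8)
The plan is to establish the standard ``angular gap'' property of path-greedy spanners: any two edges of $G$ incident to a common vertex $u$ subtend an angle at $u$ that is bounded below by some $\theta_\delta = \Omega(\delta-1)$. Given this, the directions of the edges leaving $u$ form a $\theta_\delta$-separated subset of the unit circle, so there can be at most $2\pi/\theta_\delta = O(\tfrac{1}{\delta-1})$ of them, which is precisely the degree bound; this last counting step is where planarity of $\calx$ is used.

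To prove the angular gap, I would fix $u$ and two distinct edges of $G$ incident to $u$, letting $(u,v)$ be the one that Algorithm~\ref{alg:spanner} processes first and $(u,w)$ the other; put $a = d_2(u,v)$, $b = d_2(u,w)$ and $\alpha = \angle vuw$. Since edges are processed in non-decreasing order of distance, $a \le b$, and since $(u,v)$ lies in the final spanner it was inserted when processed, hence is present in the working graph $G$ at the moment $(u,w)$ is processed. If $\alpha \ge \pi/3$ there is nothing to do (this exceeds the bound $\theta_\delta$ obtained below), so I may assume $\alpha < \pi/3$; then the law of cosines gives $d_2(v,w)^2 = a^2 + b^2 - 2ab\cos\alpha < b^2$, so $d_2(v,w) < d_2(u,w)$, and hence the pair $(v,w)$ is processed strictly before $(u,w)$. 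Consequently, at the time $(u,w)$ is processed we have $d_G(v,w) \le \delta\,d_2(v,w)$: this is immediate if $(v,w)$ was inserted, and otherwise it is exactly the failure of the greedy test for $(v,w)$ at that earlier stage, which persists because graph distances only decrease as edges are added.

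Next I would combine these observations. The greedy test for $(u,w)$ succeeds, so $d_G(u,w) > \delta b$ in the current graph; but routing through $v$ gives $d_G(u,w) \le d_G(u,v) + d_G(v,w) \le a + \delta\,d_2(v,w)$, whence $\delta b - a < \delta\,d_2(v,w)$. Since $\delta b - a > 0$ (as $\delta > 1$ and $b \ge a > 0$) one may square both sides and substitute the law of cosines; a short calculation, using $a/b \le 1$, then yields
\[
	\cos\alpha \;<\; \frac{(\delta+1)^2 - 2}{2\delta^2},
	\qquad\text{i.e.}\qquad
	1 - \cos\alpha \;>\; \frac{(\delta-1)^2}{2\delta^2}.
\]
Setting $\theta_\delta := \arccos\!\bigl(\tfrac{(\delta+1)^2-2}{2\delta^2}\bigr)$, one checks $\tfrac{(\delta+1)^2-2}{2\delta^2} \ge \tfrac12$, so $\theta_\delta \le \pi/3$, and therefore $\alpha \ge \theta_\delta$ in both cases of the split. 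Finally, from $1-\cos\theta_\delta = \tfrac{(\delta-1)^2}{2\delta^2}$ and the elementary bound $1-\cos\theta \le \theta^2/2$ we get $\theta_\delta \ge \tfrac{\delta-1}{\delta}$, so $\deg(u) \le 2\pi/\theta_\delta \le \tfrac{2\pi\delta}{\delta-1} = O(\tfrac{1}{\delta-1})$.

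The main obstacle is the step in the second paragraph: exploiting the order in which Algorithm~\ref{alg:spanner} processes edges to guarantee that, the instant the longer edge $(u,w)$ is inserted, a short detour $u\to v\to w$ of length at most $a + \delta\,d_2(v,w)$ already exists in $G$. This needs both that $(u,v)$ is present (clear, since it is shorter and lies in the final spanner) and that $(v,w)$ has already been ``resolved'' by the greedy test, for which the inequality $d_2(v,w) < d_2(u,w)$ is required; that inequality is what the assumption $\alpha < \pi/3$ buys us via the law of cosines. The remaining trigonometry is routine, the only point needing care being that squaring $\delta b - a < \delta\,d_2(v,w)$ is legitimate because the left-hand side is positive. (In $\reals^d$ the same argument, with a sphere-packing bound in place of the circle count, would give degree $O((\delta-1)^{-(d-1)})$.)
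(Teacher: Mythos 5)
Your argument is correct and complete; note that the paper itself gives no proof of this theorem, citing it from \cite{Narasimhan:07:BOOK}, and what you have reconstructed is essentially the standard angular-separation analysis of the path-greedy spanner from that reference. All the delicate points are handled properly: the ordering argument guaranteeing that $(v,w)$ is resolved before $(u,w)$ (via $d_2(v,w)<d_2(u,w)$ when $\alpha<\pi/3$), the positivity of $\delta b - a$ before squaring, and the verification that $\theta_\delta\le\pi/3$ so the case split is consistent. Nothing further is needed.
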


This result is useful to estimate the total number of edges in the spanner, since our goal is to generate a \emph{sparse} spanner, i.e. a spanner with O($|\calx|$) edges.


Considering the running time of the algorithm, since the main loop requires all pair of regions to be sorted increasingly by distance, we need to perform this sorting before the loop. This step takes $O(|\calx|^2 \log |\calx|)$. The main loop performs a minimum-path calculation in each step, with $|\calx|^2$ total steps. If we use, for instance, Dijkstra's algorithm, each of these operations can be done in $O(|E| + |\calx| \log |\calx|)$. If we select $\delta$ so that the final amount of edges in the spanner is linear, i.e. $|E| = O(|\calx|)$, we can conclude that the total running time of the main loop is $O(|\calx|^3 \log |\calx|)$. This turns out to be also the complexity of the whole algorithm. 


A common problem in the theory of spanners is the following: given a set of points $\calx \subseteq \mathbb{R}^2$ and a maximum amount of edges $m$, the goal is to find the spanner with \emph{minimum} dilation with at most $m$ edges. This has been proven to be NP-Hard (\cite{Klein:06:GD}). In our case,  we are interested in the analog of this problem: given a maximum tolerable dilation factor $\delta$, we want to find a \dspan with minimum amount of edges. However, we can see that the first problem can be expressed in terms of the second (for instance, with a binary search on the dilation factor), which means that the second problems must be at least NP-Hard as well.

\subsection{$\adv$ of the obtained mechanism}
\label{sec:mech-ee}

As discussed in \ref{sec:rezas-mech}, the privacy of a location obfuscation
mechanism can be expressed in terms of $\adv$ for an adversary metric $\da$. In
\cite{Shokri:12:CCS}, the problem of optimizing privacy for a given $\sql$
constraint is studied, providing a
method to obtain a \sqlopt{\uprof}{\da}{\dq} mechanism for any $q,\pi,\dq,\da$.

In our case, we optimize $\sql$ for a given privacy constraint, constructing a
\dpopt{\prior}{\epsilon\dx}{\dq} mechanism.
We now show that, if $\dq$ and $\da$ coincide, the mechanism generated by any of the two optimization problems of
the previous sections is also \sqlopt{\uprof}{\dq}{\dq}.

$\adv$ corresponds to an adversary's remapping $H$ that minimizes his
expected error with respect to the metric $\da$ and
his prior knowledge $\prior$.
A crucial observation is that \priv{\dx} is closed under remapping.

\begin{restatable}{lemma}{theoremap}
\label{theo:remap}
Let  $K$ be a \privadj{\dx} mechanism, and let $H$ be a remapping. Then $KH$ is \privadj{\dx}.
\end{restatable}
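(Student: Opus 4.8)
The plan is to prove this directly from the equivalent pointwise form of \priv{\dx} recalled in the excerpt, namely
$K(x)(z) \le e^{\dx(x,x')} K(x')(z)$ for all $x,x'\in\calx$ and $z\in\calz$, rather than manipulating the multiplicative distance $\dprob$ itself. This form already bakes in the conventions about zeros (the ratio being $0$, resp.\ $\infty$, in the degenerate cases), so no special-case analysis is needed. Write the remapping $H$ as a (row-)stochastic matrix with nonnegative entries $h_{z\hat{x}}\ge 0$; then the composed mechanism $KH$ acts by $(KH)(x)(\hat{x}) = \sum_{z} K(x)(z)\,h_{z\hat{x}}$, which is again a probability distribution over outputs.

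The key step is a termwise comparison followed by summation. Fix arbitrary $x,x'\in\calx$ and an output $\hat{x}$. For every intermediate value $z$, the \priv{\dx} inequality for $K$ gives $K(x)(z) \le e^{\dx(x,x')}\,K(x')(z)$; multiplying both sides by the nonnegative coefficient $h_{z\hat{x}}$ preserves the inequality, and then summing over all $z$ yields $\sum_z K(x)(z)h_{z\hat{x}} \le e^{\dx(x,x')}\sum_z K(x')(z)h_{z\hat{x}}$, i.e.\ $(KH)(x)(\hat{x}) \le e^{\dx(x,x')}(KH)(x')(\hat{x})$. Since $x,x',\hat{x}$ were arbitrary, $KH$ satisfies the pointwise characterization, hence is \privadj{\dx}.

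There is essentially no obstacle here — the statement is the post-processing (data-processing) property of differential privacy, and the proof is a two-line calculation. The only point worth a word of care is that the inequality is applied to each summand \emph{before} summing, which is legitimate precisely because the remapping weights are nonnegative; this is automatic for a stochastic matrix. (If one insisted on arguing via $\dprob$ directly, one would additionally have to handle the cases where some $K(x')(z)$ or $(KH)(x')(\hat{x})$ vanishes, but the pointwise formulation avoids this entirely.)
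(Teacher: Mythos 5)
Your proof is correct and follows essentially the same route as the paper's: both use the pointwise formulation $k_{xz}\le e^{\dx(x,x')}k_{x'z}$, multiply each term by the nonnegative remapping weight $h_{z\hat{x}}$, and sum over $z$ to obtain the corresponding bound for $(KH)_{x\hat{x}}$. Your added remark about why the pointwise form sidesteps the zero-probability cases of $\dprob$ is a nice clarification but does not change the argument.
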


Now let $K$ be a \dpopt{\uprof}{\dx}{\dq} mechanism and $H$ a
remapping. Since $KH$ is \privadj{\dx} (Lemma~\ref{theo:remap}) and $K$ is
optimal among all such mechanisms, we have that:
 \[
\sqlarg{K}{\uprof}{\dq} \leq \sqlarg{KH}{\uprof}{\dq}\quad \forall H
\]
As a consequence, assuming that $\dq$ and $\da$ coincide, the adversary
minimizes his expected error by applying no remapping at all (i.e. the identity
remapping), which means that $\advarg{K}{\pi}{\dq} = \sqlarg{K}{\pi}{\dq}$ and
therefore $K$ must be \sqlopt{\uprof}{\dq}{\dq}.
%

\begin{restatable}{theorem}{theodpopt}
\label{theo:dpopt}
If a mechanism $K$ is \dpopt{\uprof}{\dx}{\dq} then it is also
\optpriv{\uprof}{\dq}{\dq}{q} for $q = \sqlarg{K}{\pi}{\dq}$.
\end{restatable}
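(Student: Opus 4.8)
The plan is to verify directly the two conditions of Definition~\ref{def:sql-opt} for the adversary metric $\da = \dq$ and the bound $q = \sqlarg{K}{\uprof}{\dq}$. Condition~1, namely $\sqlarg{K}{\uprof}{\dq} \le q$, holds with equality by the very choice of $q$, so nothing is to be done there. The content of the theorem lies entirely in Condition~2: for every mechanism $K'$ with $\sqlarg{K'}{\uprof}{\dq} \le q$, one must show $\advarg{K'}{\uprof}{\dq} \le \advarg{K}{\uprof}{\dq}$.

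The key step is to evaluate $\advarg{K}{\uprof}{\dq}$ exactly and show that it equals $q$. By definition $\advarg{K}{\uprof}{\dq} = \min_H \expdistarg{KH}{\uprof}{\dq}$. Choosing $H$ to be the identity remapping, for which $KH = K$, gives the upper bound $\advarg{K}{\uprof}{\dq} \le \expdistarg{K}{\uprof}{\dq} = \sqlarg{K}{\uprof}{\dq} = q$. For the matching lower bound, take an arbitrary remapping $H$: by Lemma~\ref{theo:remap} the mechanism $KH$ is again \privadj{\dx}, and since $K$ is \dpopt{\uprof}{\dx}{\dq} it has the smallest $\sql$ among all \privadj{\dx} mechanisms (Definition~\ref{def:dx-opt}), whence $q = \sqlarg{K}{\uprof}{\dq} \le \sqlarg{KH}{\uprof}{\dq} = \expdistarg{KH}{\uprof}{\dq}$. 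Minimizing over $H$ gives $\advarg{K}{\uprof}{\dq} \ge q$, so $\advarg{K}{\uprof}{\dq} = q$; in other words, the adversary gains nothing from any remapping.

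To conclude, let $K'$ be any mechanism with $\sqlarg{K'}{\uprof}{\dq} \le q$. Using once more the identity remapping as a feasible point in the minimization defining $\adv$,
\begin{align*}
\advarg{K'}{\uprof}{\dq}
  &\le \expdistarg{K'}{\uprof}{\dq} = \sqlarg{K'}{\uprof}{\dq}\\
  &\le q = \advarg{K}{\uprof}{\dq},
\end{align*}
which is exactly Condition~2. Hence $K$ is \optpriv{\uprof}{\dq}{\dq}{q}.

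I do not anticipate a genuine difficulty here: this is the argument sketched informally just before Lemma~\ref{theo:remap}, and its only substantial ingredient --- that \priv{\dx} is preserved under post-composition with a remapping --- is precisely Lemma~\ref{theo:remap}. The points that need a little care are merely that the identity matrix is a legitimate remapping with $K\cdot\mathrm{Id}=K$, so that $\sql$ and $\adv$ can be compared directly, and that the coincidence $\da=\dq$ is essential: it is exactly what turns the $\sql$-optimality of $K$ into the equality $\advarg{K}{\uprof}{\dq} = \sqlarg{K}{\uprof}{\dq}$, and thereby into optimality for $\adv$.
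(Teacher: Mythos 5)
Your proof is correct and follows essentially the same route as the paper: both rest on the observation that $\adv \le \sql$ (via the identity remapping) and on Lemma~\ref{theo:remap} to show that no remapping can improve on $K$, yielding $\advarg{K}{\uprof}{\dq} = \sqlarg{K}{\uprof}{\dq}$, after which Condition~2 follows by the same chain of inequalities. The only cosmetic difference is that you establish this equality directly via matching upper and lower bounds, whereas the paper argues the lower bound by contradiction.
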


It is important to note that Theorem \ref{theo:dpopt} holds for any metric
$\dx$. This means that both mechanisms obtained as result of the optimization
problems presented in Sections \ref{sec:approach} and
\ref{sec:mechanism-spanner} are \sqlopt{\uprof}{\dq}{\dq} -- since they are
\dpopt{\uprof}{\edx}{\dq} and \dpopt{\uprof}{\edg}{\dq} respectively -- however
for a different value of $q$. In fact, in contrast to the method of
\cite{Shokri:12:CCS} in which the quality bound $q$ is given as a parameter, our
method optimizes the $\sql$ given a privacy bound. Hence, the resulting
mechanism will be \sqlopt{\uprof}{\dq}{\dq}, but for a $q$ that is not known in
advance and will depend on the privacy constraint $\epsilon$ and the dilation
factor $\delta$. The greater the $\epsilon$ is (i.e. the higher the
privacy), or the lower the $\delta$ is (i.e. the better the approximation),
the lower the quality loss $q$ of the obtained mechanism will be.

Finally, we must remark that this result only holds in the case where the
metrics $\dq,\da$ coincide. If the metrics differ, e.g. the quality is measured
in terms of the Euclidean distance (the user is interested in accuracy) but the
adversary uses the binary distance (he is only interested in the exact
location), then this property will no longer be true.

%
%

\subsection{Practical considerations}

We conclude this section with a discussion on the practical applicability of
location obfuscation. First, it should be noted that, although constructing an
optimal mechanism is computationally demanding, once the matrix $K$ is computed,
obfuscating a location $x$ only involves drawing a reported location from the
distribution $K(x)$ which is computationally trivial. Moreover, although
obfuscation is meant to happen on the user's smartphone, computing the mechanism
can be offloaded to an external server and even parallelized. The user only
needs to transmit $\pi,\epsilon\dx,\dq$ (which are considered public) and
receive $K$, and the computation only needs to be performed occasionally, to adapt
to changes in the user profile.

Second, an important feature of obfuscation mechanisms is that they require no
cooperation from the service provider, who simply receives a location and has no
way of knowing whether it is real or not. Obfuscation can happen on the user's
device, at the operating system or browser level, which is crucial since the
user has strong incentives to apply it while the service provider does not.
The user's device could also perform filtering of the results, as described
in \cite{Andres:13:CCS}.

Finally, we argue that the common idea that users of LBSs are willing to give up their
privacy is misleading: the only alternative offered is not to use the service.
The usage of browser extensions such as ``Location Guard'' \cite{location.guard}
shows that users do care about their privacy and that obfuscation can be a
practical approach for using existing services in a privacy friendly way.


\section{Evaluation}
\label{sec:evaluation}

 \begin{figure}[tb]
      \centering
      \includegraphics[width=1.0\columnwidth]{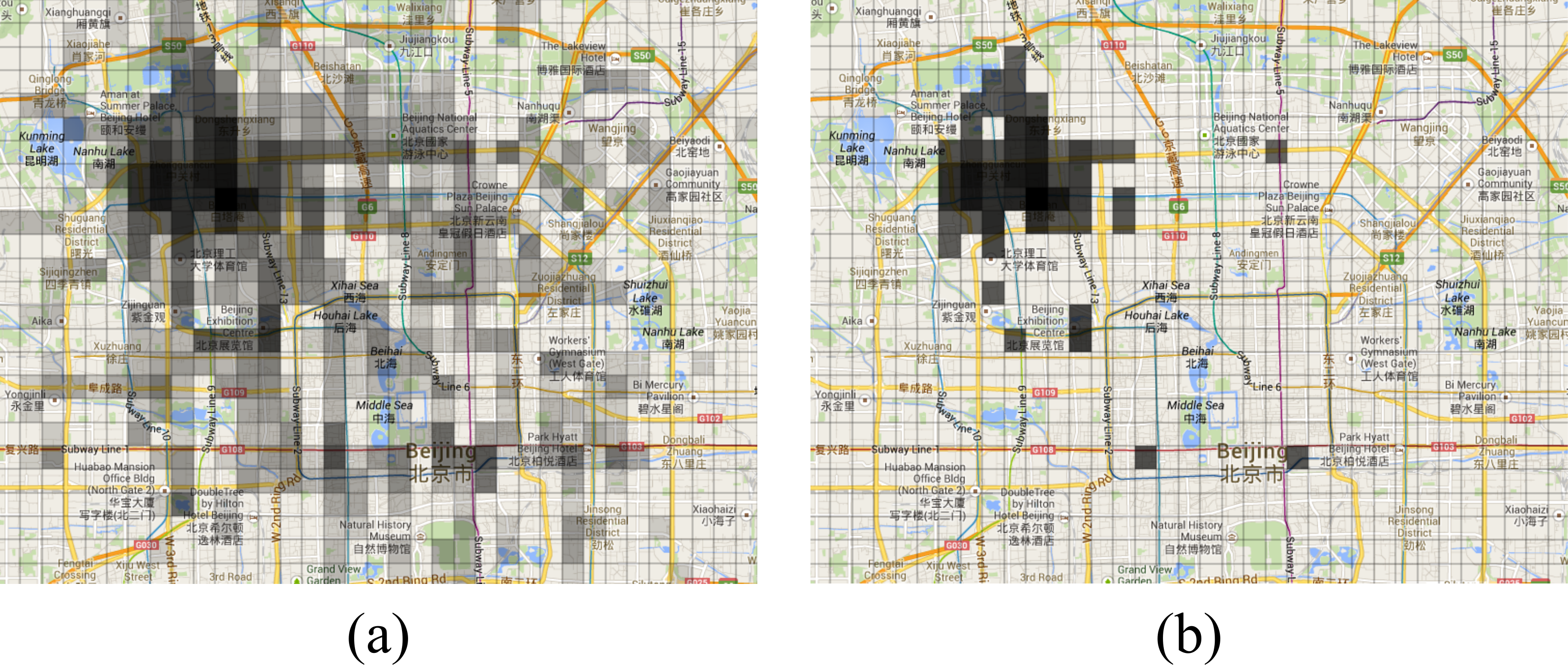}
   \caption{(a) Division of the map of Beijing into regions of size 0.658 x 0.712 km. The density of each region represents its ``score'', that is, how frequently users visit it. (b) The 50 selected regions. These regions are the ones with highest density between the whole set of regions.}
   \label{fig:regions}
 \end{figure}


In this section we evaluate the technique for constructing optimal mechanisms
described in the previous sections. We perform two kinds of evaluation: first,
a comparison with other mechanisms, namely the one of Shokri et al. and the
Planar Laplace mechanism. Second, a performance evaluation of the spanner
approximation technique.

The comparison with other mechanisms is performed with respect to both privacy
and quality loss. For privacy, the main motivation is to evaluate the
mechanisms' privacy under different priors, and in particular under priors
different than the one they were constructed with. Following the motivating
scenario of the introduction, we consider that a user's profile can vary
substantially between different time periods of the day, and simply by taking
into account the time of a query, the adversary can obtain a much more
informative prior which leads to a lower privacy. For the purposes of the
evaluation, we consider priors corresponding to four different time periods: the
full day, the morning (7am to noon), afternoon (noon to 7pm) and night (7pm to
7am). Then we construct the mechanisms using the full day prior and compare
their privacy for all time periods.

We perform our evaluation on two widely used datasets: GeoLife
\cite{Zheng:08:UbiComp,Zheng:09:WWW,DBLP:journals/debu/ZhengXM10} and T-Drive
\cite{Yuan:11:KDD, DBLP:conf/gis/YuanZZXXSH10}. The results of GeoLife are presented in detail in the following
sections, while, due to space restrictions, those of T-Drive (which are in
general similar) are summarized in Section \ref{sec:2nd-dataset}.

\subsection{The GeoLife dataset}
The GeoLife GPS Trajectories dataset contains 17621 traces from 182 users,
moving mainly in the north-west of Beijing, China, in a period of over five
years (from April 2007 to August 2012). The traces show users performing
routinary tasks (like going to and from work), and also traveling, shopping, and
doing other kinds of entertainment or unusual activities. Besides, the traces
were logged by users using different means of transportation, like walking,
public transport or bike. More than $90\%$ of the traces were logged in a dense
representation, meaning that the individual points in the trace were reported
every 1-5 seconds or every 5-10 meters. Since user behaviour changes over time,
and the mechanism should be occasionally reconstructed, we restrict each user's
traces to a 90 days period, and in particular to the one with the greatest number
of recorded traces, so that the prior is as informative as possible.

 \begin{figure}[t!]

	 \hspace{-0.4cm}
      \includegraphics[width=1.1\columnwidth]{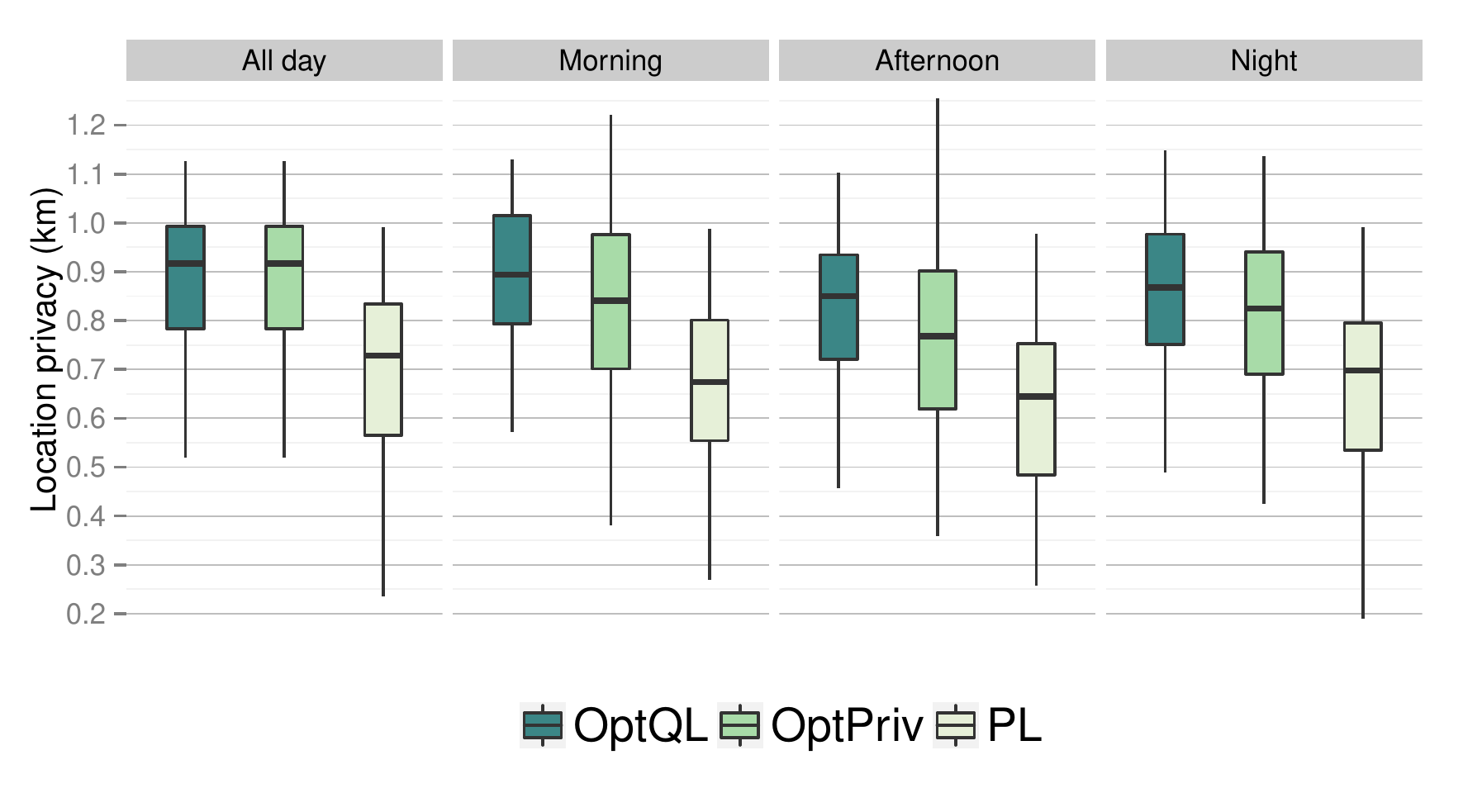}
         \vspace{-0.5cm}
   \caption{Boxplot of the location privacy provided by the three different mechanisms under considered priors. The $\giop$ mechanism was constructed with $\epsilon = 1.07$ and $\delta = 1.05$. }
   \label{fig:chart-privacy}
 \end{figure}

\subsection{Mechanism comparison wrt privacy and quality loss}
\label{sec:eval-privacy}

For the evaluation, we divide the map of Beijing into a grid of regions 0.658
km wide and 0.712 km high, displayed in Figure \ref{fig:regions}a. To avoid
users for which little information is available, we only keep those having
at least 20 recorded points within the grid area for each one of the time
periods. Whenever we count points, those falling within the same grid region during the
same hour are counted only once, to prevent traces with a huge number of
points in the same region (e.g. the user's home) from completely skewing the
results. After this filtering, we end up with 116 users (64\% of the total 182).

We then proceed to calculate the 50 ``most popular'' regions of the grid as
follows: for each user, we select the 30 regions in which he spends the
greatest amount of time. A region's ``score'' is the number of users that have
it in their 30 highest ranked ones. Then we select the 50 regions with the
highest score.

Figure \ref{fig:regions}a shows the division of the map into regions, with the
opacity representing the score of each of them, while Figure \ref{fig:regions}b
shows the 50 regions with highest score. We can see that most of the selected
regions are located in the south-east of the Haidian district, and all of them
are located in the north-west of Beijing. We consider the set of locations
$\calx$ to be the centers of the selected regions, and the metric $\dx$ to be
the Euclidean distance between these centers, i.e. $\dx = d_2$.

Finally, a second filtering is performed, again keeping users with at least
20 points in each time period, but this time considering only the 50 selected
regions. After this, we end up with a final set of 86 users (46\% of the total 182).

In this section, we evaluate the location privacy and the utility of three different mechanisms under the several prior distributions for each user. These priors correspond to different parts of the day (all day, morning, afternoon and night), and are computed by counting the number of points, logged in the corresponding time period, that fall in each of the selected regions (again, counting only once those points logged within the same hour), and then by normalizing these numbers to obtain a probability distribution.

 \begin{figure}[t]
      \centering
      \includegraphics[width=\columnwidth]{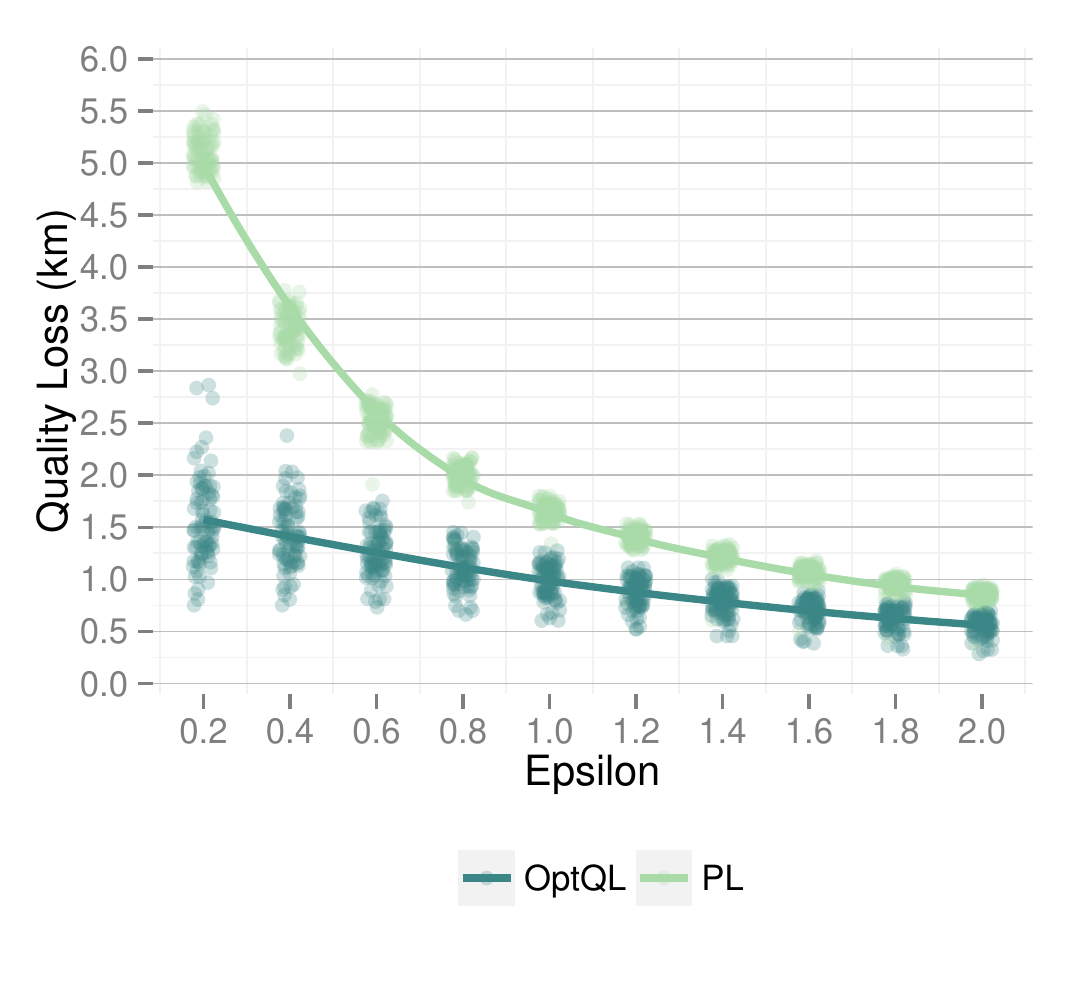}
   \vspace{-0.75cm}
   \caption{Quality loss of the $\giop$ and $\plap$ mechanisms for different values of $\epsilon$. The mechanisms were calculated for all users. Here, points represent the utility for every user, while the two lines join the medians for each mechanism and each value of $\epsilon$.}
   \label{fig:chart-epsilons}

 \end{figure}

We start by evaluating the location privacy provided by the different mechanisms.
However, we must note that in general location privacy mechanisms do not satisfy
\edp{\dx} unless they are specifically designed to do so. Therefore, for this evaluation, we measure
location privacy with the metric $\adv$, proposed in \cite{Shokri:12:CCS} and
described in Section \ref{sec:rezas-mech}, which measures the expected error of
the attacker under a given prior distribution. 
In order to perform a fair comparison, we construct the mechanisms in such a way that their $\sql$ coincide.
The first step is to select a privacy level $\epsilon$ and a dilation $\delta$, and then to construct the mechanism 
described in Section \ref{sec:mechanism-spanner}. We will call this mechanism $\giop$.
This mechanism has a $\sql$ of
$q = \sqlarg{\giop}{\uprof}{d_2}$.
We then continue by constructing the optimal mechanism of Shokri et al \cite{Shokri:12:CCS}, and setting the $\sql$ as $q$. We call this mechanism $\eeop$.
Finally, we compute a discretized version of the Planar Laplace mechanism
of Andr\'es et al \cite{Andres:13:CCS}.
under a privacy constraint $\epsilon'$ (in general different from $\epsilon$) such that the $\sql$ of this mechanism is also $q$. We call this mechanism $\plap$. Note that at the end of this process, by construction, the $\sql$ of the three mechanisms is $q$.

 \begin{figure*}[t]
	\includegraphics[width=\textwidth]{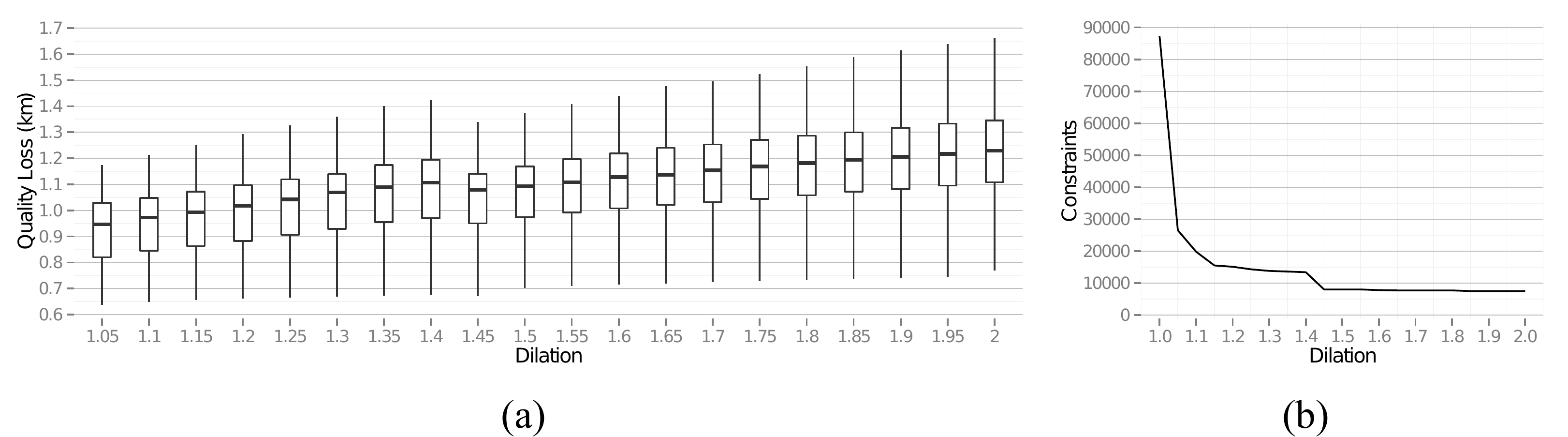}
   \caption{(a) Boxplot of the relation between $\sql$ and dilation for the mechanism $\giop$ with privacy constraint $\epsilon = 1.07$. The spanner is calculated with the greedy algorithm presented in Section \ref{sec:algorithm}. (b) Relation between the approximation ratio and the number of constraints in the linear program. This number is independent from the user and form the value of $\epsilon$.}
   \label{fig:charts-utilities}
 \end{figure*}

We begin the evaluation comparing the location privacy of each mechanism for each of the selected users, under the four constructed priors. We fix $\epsilon = 1.07$ (which intuitively corresponds to a ratio of 2 between the probability for two regions adjacent in the grid to report the same observed location) and $\delta = 1.05$. Figure \ref{fig:chart-privacy} shows a boxplot of the location privacy (in km) offered by the different mechanisms under each prior. In all four cases, the general performance of our mechanism is better than that of the others, with the only exception
being the all-day prior (which is the one used in the construction of the mechanisms) since, as explained in Section \ref{sec:mech-ee}, $\giop$ and $\eeop$ are \sqlopt{\uprof}{d_2}{d_2}
and therefore offer the same privacy.

Finally, to show the benefits of using a mechanism with optimal utility, we compare now the $\sql$ of the mechanisms $\giop$ and $\plap$ when both mechanisms are generated with the same privacy level $\epsilon$. We can see the results in Figure \ref{fig:chart-epsilons}. The $\giop$ mechanism clearly offers a better utility to the user, while guaranteeing the same level of geo-indistinguishability.

\subsection{Performance of the approximation algorithm}
\label{sec:performance}

We recall from Section \ref{sec:mechanism-spanner} that if we consider a large number of locations in $\calx$, then the number of constraints in the linear program might be large. Hence, we introduced a method 
based on a spanning graph $G$ to reduce the total number of constraints of the linear program. 
However, in general the obtained mechanism is no longer \dpopt{\uprof}{\edx}{\dq}, and therefore it has a higher $\sql$ than the optimal one.

In this section we study the tradeoff between the increase in the $\sql$ of the mechanism and the reduction in the number of constraints of the optimization problem, as a consequence of using our approximation technique. We also show how this reduction affects the running time of the whole approach.
We start by constructing the $\giop$ mechanism for all selected users and for different dilations in the range from $1.05$ to $2.0$, in all cases considering $\epsilon = 1.07$ as before.
We then measure the $\sql$ of each mechanism under the user profile. We can see the results in Figure \ref{fig:charts-utilities}a. 
It is clear that the $\sql$ increases slowly with respect to the dilation: the median value is $0.946$ km for $\delta = 1.05$, is $0.972$ km for $\delta = 1.1$, and $1.018$ km for $\delta = 1.2$. Therefore we can deduce that, for a reasonable approximation, the increase in the quality loss is not really significant.
It is worth noting that we do not show the $\sql$ for $\delta = 1$ in the plot (corresponding to the case where $\dx$ and $d_G$ are the same). The reason is that in that case the number of constraints is really high, and therefore it takes a lot of time to generate one instance of the mechanism (and much more time to generate it for the 86 users considered). 

The relation between the dilation and the number of constraints is shown in Figure \ref{fig:charts-utilities}b. Note that this number is independent from the user, and therefore it is enough to calculate it for just one of them. It is clear that the number of constraints decreases exponentially with respect to the dilation, and therefore even for small dilations (which in turn mean good approximations) the number of constraints is significantly reduced with the proposed approximation technique. For instance, we have 87250 constraints for $\delta = 1$ (the optimal case), and 25551 constraints for $\delta = 1.05$. This represents a decrease of $71\%$ with respect to the optimal case, with only $1.05$ approximation ratio.

It is also worth noting that, between $\delta = 1.4$ and $\delta = 1.45$ there is a pronounced decrease in the number of constraints (Figure \ref{fig:charts-utilities}b) and \emph{also} a decrease in the $\sql$ (Figure \ref{fig:charts-utilities}a). This might seem counterintuitive at first, since one would expect that a worse approximation should always imply a higher loss of quality. However, there is a simple explanation: although the spanner with $\delta = 1.45$ has a higher worst-case approximation ratio, the average-case ratio is actually better that the one of the spanner with $\delta = 1.4$. This phenomenon is a consequence of the particular topology of the set of locations and to the algorithm used to get the spanner.

Finally, we measure the running time of the method used to generate the $\giop$ mechanism, under different methods to solve the linear optimization problem. 
The experiments were performed in a 2.8 GHz Intel Core i7 MacBook Pro with 8 GB of RAM running Mac OS X 10.9.1, and the source code for the method was written in C++, using the routines in the GLPK library for the linear program. We compare the performance of three different methods included in the library: the simplex method in both its primal and dual form, and the primal-dual interior-point method. Besides, we run these methods on both the primal linear program presented in Section \ref{sec:mechanism-spanner} and its dual form, presented in Appendix \ref{app:dual}.
Since the running time depends mainly on the number of locations being considered, in the experiments we focus on just one user of the dataset, and we fix the privacy level as $\epsilon = 1.07$. The results can be seen in Table \ref{tab:results}.
Some fields are marked with ``1h+'', meaning that the execution took more than one hour, after which it was stopped. Others are marked with ``Error'', meaning that the execution stopped before one hour with an error\footnote{The actual error message in this case was: ``Error: unable to factorize the basis matrix (1). 
Sorry, basis recovery procedure not implemented yet''}.
A particular case of error happened when running the interior-point method on the dual linear program, where all executions ended with a ``numerical instability'' error (and therefore this case is not included in the table). 
From the results we can observe that:
\begin{itemize}
\item The only two methods that behave consistently (that never finish with error, and the running time increases when the dilation decreases) are the dual simplex and the interior-point methods, both when applied to the primal program.
\item From these, the interior-point method performs better in the case of bigger dilation, while it does it much worse for very small ones.
\item Somewhat surprisingly, the dual linear program does not offer a significant performance improvement, specially when compared with the interior-point method.
\end{itemize}

In the case of $\eeop$, the mechanism is generated using Matlab's linear program solver (source code kindly provided by the authors of \cite{Shokri:12:CCS}). We generated the mechanism for the same cases, and observed that the running time mainly depends on the number of regions: for 50 regions, the mechanism is generated in approximately 1 minute, while for 75 regions it takes about 11 minutes.

\begin{table}[t]
\footnotesize
	\small
	\begin{tabular} { cc|c|c|c|c|c|c| }
	
	\cline{3-7}
	& & \multicolumn{2}{c|}{Primal simplex} & \multicolumn{2}{c|}{Dual simplex} & Interior \\ \hline
\multicolumn{1}{|c|}{$|\calx|$} & $\delta$ & Pr. LP & Du. LP & Pr. LP & Du. LP & Pr. LP \\ 
	\hline
	\multicolumn{1}{|c|}{ \multirow{5}{*}{\begin{tabular}[x]{@{}c@{}}50\end{tabular}} }& \multicolumn{1}{|c|}{ $1.0$ }
												  & 57s	& 1h+ 		& 40s 	& 45s 	& 49m 20s \\ 
	\multicolumn{1}{|c|}{ }& \multicolumn{1}{|c|}{ $1.1$ } & 46.4s 		& 5.2 	& 5.9s 	& 15.5s 	& 7.5s \\
	\multicolumn{1}{|c|}{ }& \multicolumn{1}{|c|}{ $1.2$ } & 4m 37s 	& 2s		& 4s	 	& 1h+ 		& 2.7s \\ 
	\multicolumn{1}{|c|}{ }& \multicolumn{1}{|c|}{ $1.5$ } & 2s 		& 1s 		& 2s 		& 3s 		& 0.5s \\ 
	\multicolumn{1}{|c|}{ }& \multicolumn{1}{|c|}{ $2.0$ } & Error 		& 1s 		& 2s 		& 2s 		& 0.5s \\ \hline
	\multicolumn{1}{|c|}{ \multirow{5}{*}{\begin{tabular}[x]{@{}c@{}}75\end{tabular}} }& \multicolumn{1}{|c|}{ $1.0$ }
												   & 1h+	& 1h+ 		& 29m 26s& 1h+ 	& 1h+	 \\ 
	\multicolumn{1}{|c|}{ }& \multicolumn{1}{|c|}{ $1.1$ } & 1h+		& Error	& 1m 12s 		& 2m 19s 	& 55s  \\ 
	\multicolumn{1}{|c|}{ }& \multicolumn{1}{|c|}{ $1.2$ } & 1h+ 		& Error	& 42s	 	& 48.4s 	& 11.7s \\ 
	\multicolumn{1}{|c|}{ }& \multicolumn{1}{|c|}{ $1.5$ } & 1h+ 		& 5m 55s 	& 19.2s 		& 1h+ 		& 2.2s \\
	\multicolumn{1}{|c|}{ }& \multicolumn{1}{|c|}{ $ 2.0$ } & 1h+ 		& 21.8s 	& 27.2s 		& 15.5s 	& 1.7s  \\ \hline
	\end{tabular}
	\caption{Execution times of our approach for 50 and 75 locations, for different values of $\delta$, and using different methods to solve the linear program.}
	\label{tab:results}
\end{table}


 \begin{figure}[t]

\centering
      \includegraphics[width=\columnwidth]{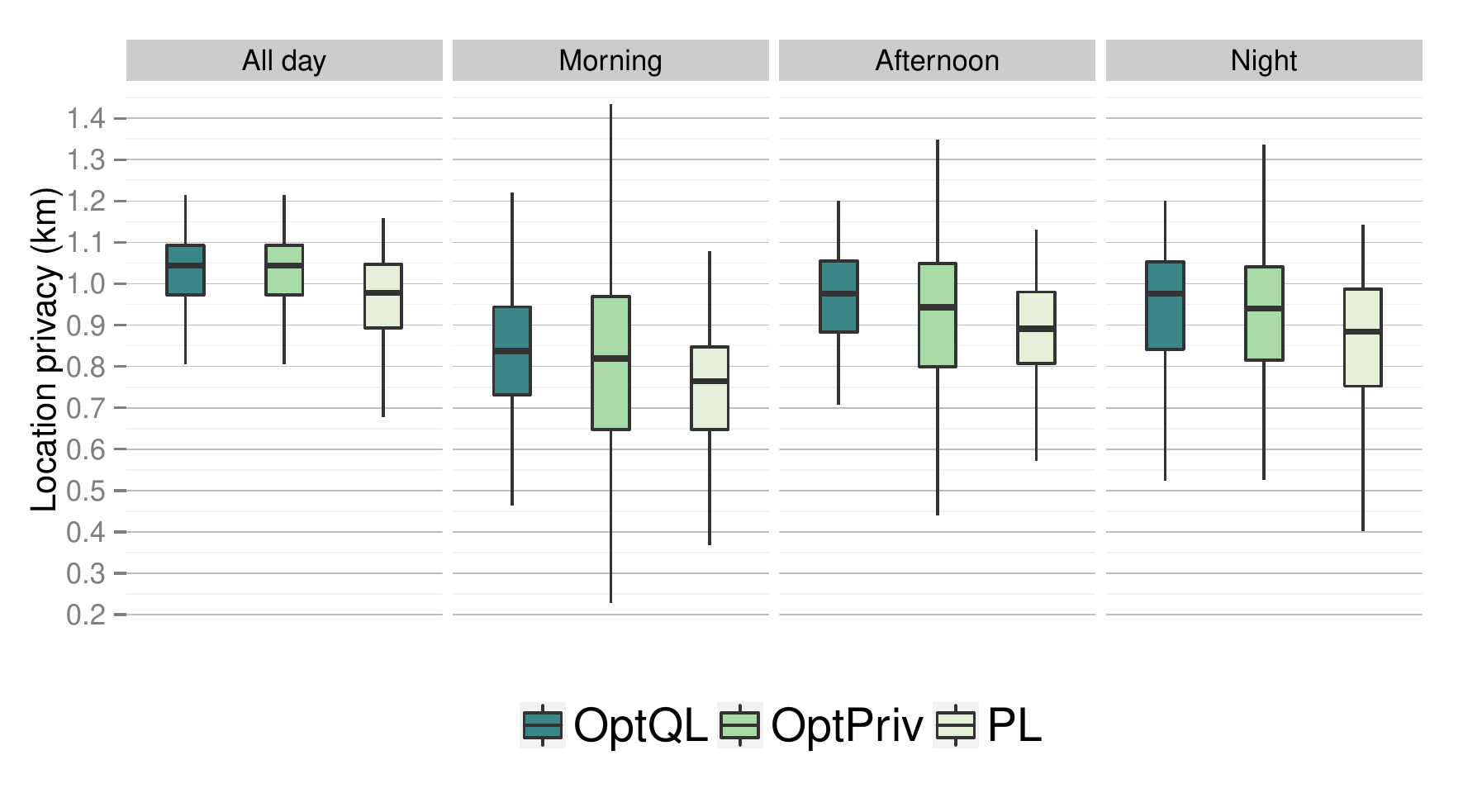}
   \caption{Boxplot of the location privacy for the T-Drive dataset. The median value of the location privacy for $\giop$ is always as good as the one of the other mechanisms. }
   \label{fig:chart-privacy-2nd}
 \end{figure}

 \begin{figure}[b!]

      \centering
      \includegraphics[width=\columnwidth]{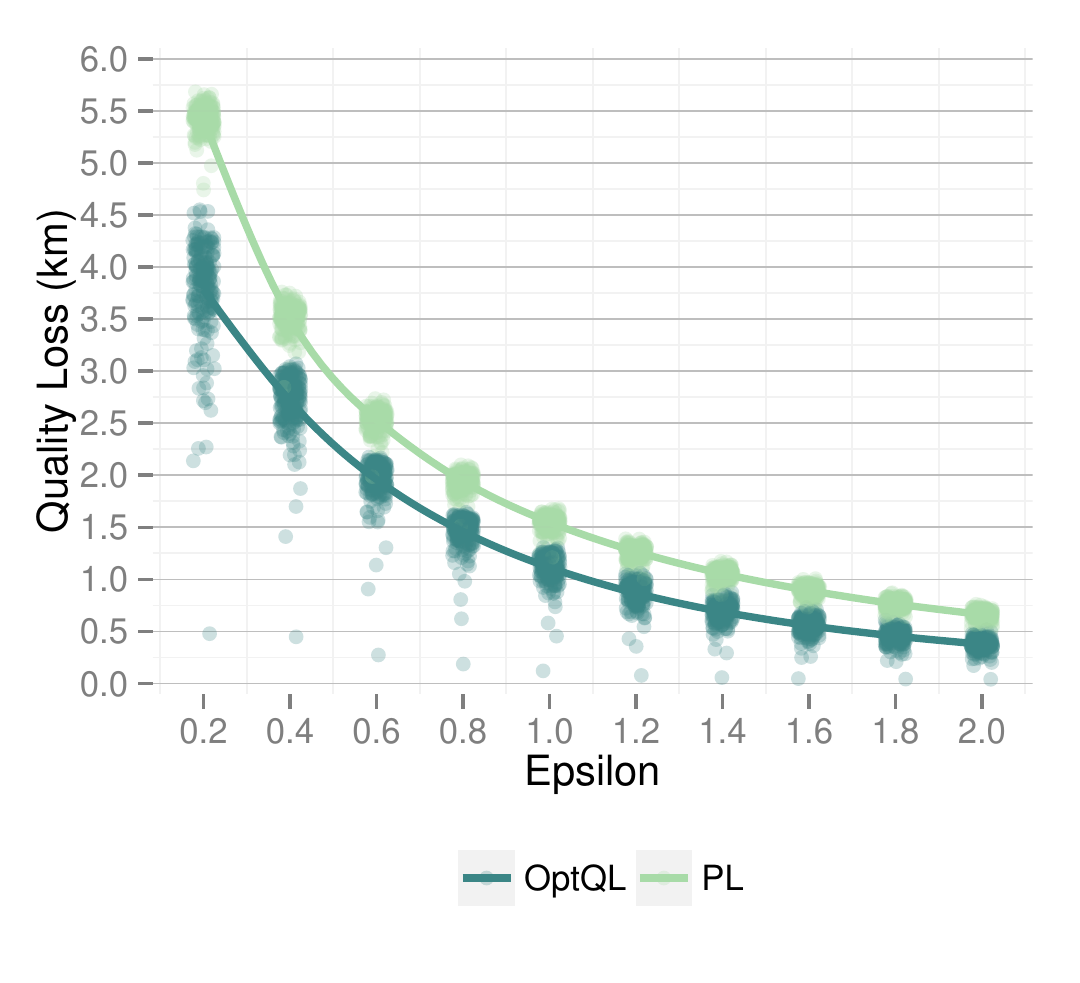}
         \vspace{-0.5cm}
   \caption{Quality loss of the $\giop$ and $\plap$ mechanisms for different values of $\epsilon$, using the data in the T-Drive dataset. The loss of quality of $\giop$ is always smaller than the one of $\plap$. }
   \label{fig:chart-epsilons-2nd}
 \end{figure}

\subsection{The T-Drive dataset}
\label{sec:2nd-dataset}

In order to reaffirm the validity of the proposed approach, we performed the same evaluation in a different dataset: the T-Drive trajectories dataset. This dataset contains traces of 10357 taxis in Beijing, China, during the period of one week. The total distance of the traces in this dataset is about 9 million kilometres, with more than 15 million reported points. The average time between consecutive points in a trace is 177 seconds, and the average distance is 623 meters.

Due to the huge amount of users in this dataset, we started the evaluation process by blindly selecting (using a standard random function) 5\% of the total number users (about 532 users out of 10357). We then perform the same steps as described in the previous sections, particularly those described in Section \ref{sec:eval-privacy}. In Figure \ref{fig:chart-privacy-2nd} we can see the comparison of the location privacy for the different mechanisms. We can see that, also for this dataset, the privacy level of $\giop$ is, in general, as good as the one of $\eeop$, and always better than the one of $\plap$. In particular, the median value for $\giop$ is always higher than the corresponding one for the other mechanisms (again, with the exception of the all day prior, for which we know that these values coincide).  We can also see in Figure \ref{fig:chart-epsilons-2nd} the comparison in terms of utility of the mechanisms $\giop$ and $\plap$. Again, the quality loss of $\giop$ is, in all cases, better than the one of $\plap$. This is to be expected, since, from all mechanisms providing a certain geo-indistinguishability, $\giop$ is the one with optimal utility (or really close to the optimal utility when the approximation is used).

%
%
%


\section{Conclusion and related work}\label{related}


\subsubsection*{Related work}
In the last years, a large number of location-privacy protection techniques, diverse  both in nature and goals, have been   proposed and studied.
Many of these aim at allowing the user of an LBS to hide his \emph{identity} from the service provider. Several approaches are based in the notion of $k$-anonymity \cite{Gruteser:03:MobiSys,Gedik:05:ICDCS,Mokbel:06:VLDB}, requiring that the attacker cannot identify a user from at least other $k-1$ different users. Others are based on the idea of letting the users use pseudonyms to interact with the system, and on having regions (\emph{mix zones}, \cite{Beresford:03:Perv,Freudiger:09:PETS}),  where the users can change their pseudonyms without being traced by the system. All these approaches are incomparable with ours, since ours aims at hiding the \emph{location} of the user and not his identity.

Many approaches to location privacy are based on  obfuscating the position of
the user. A common technique for this purpose is \emph{cloaking}
\cite{Bamba:08:WWW,Duckham:05:Pervasive,Xue:09:LoCa,Gedik:05:ICDCS}, which
consists in blurring the user's location by reporting a region to the service
provider. 
Another technique is based on adding \emph{dummy
locations}\cite{Kido:05:ICDE,Shankar:09:UbiComp,Chow:09:WPES} to the request
sent to the service provider. In order to preserve privacy, these dummy
locations should be generated in such a way that they look equally likely to be
the user's real position. A different approach is to construct mechanisms that
provide optimal privacy under certain quality constraints \cite{Shokri:12:CCS}
(an approach dual to ours, as discussed in the introduction),
while \cite{Herrmann:13:WPES} additionally takes into account bandwidth constraints.
Finally, collaborative models have been proposed
\cite{Shokri:14:TDSC}, where privacy is achieved with a peer-to-peer scheme
where users avoid querying the service provider whenever they can find the
requested information among their peers.

Differential Privacy has also been used in the context of location privacy;
however, it is in general used to protect \emph{aggregate} location information.
For instance, \cite{Machanavajjhala:08:ICDE} presents a way to statistically
simulate the location data from a database while providing privacy guarantees.
In \cite{Ho:11:GIS}, a quad tree spatial decomposition technique is used to
achieve differential privacy in a database with location patter mining
capabilities. On the other hand, Dewri \cite{Dewri:12:TMC} proposes a
combination of differential privacy and $k$-anonymity for the purposes of hiding
the location of a single individual. The proposed definition requires that the
distances between the probability distributions corresponding to $k$ fixed
locations (defined as the anonymity set) should not be greater than the privacy
parameter $\epsilon$.

The work  closest to ours is \cite{Shokri:14:TechRep}, which independently
proposes a linear programming technique to construct optimal obfuscation
mechanisms wrt either $\adv$ or \geoind{}. Although there is an overlap in the
main construction (the optimization problem of Section~\ref{sec:approach}), most
of the results are substantially different. The approximation technique of
\cite{Shokri:14:TechRep} consists of discarding some of the \geoind{}
constraints when the distance involved is larger than a certain lower bound.
This affects the \geoind{} guarantees of the mechanism, although the effect can
be tuned by properly selecting the bound for discarding constraints. On the
other hand, our approximation technique, based on spanning graphs, can be used
to reduce the number of constraints from cubic to quadratic without jeopardizing
the privacy guarantees, by accepting a small decrease on the utility. Moreover,
we show that the mechanism obtained from this optimization problem is also
optimal wrt $\adv$ (Theorem~\ref{theo:dpopt}), which is an important property of
the proposed method. Finally, the evaluation methods are substantially different: in
\cite{Shokri:14:TechRep} the employed set of prior distributions differ in their
level of entropy (priors with low entropy are considered more informative). In
our work, we obtain the different priors by combining the distribution of the
user (assumed to be known by the adversary) with some public available
information (for instance, the time of the day).

Finally, \priv{\dx} has been used in \cite{Dwork:12:ITCS} to capture
\emph{fairness}, instead of privacy. The goal is to construct a fair
mechanism that produces similar reported values for ``similar'' users, the
similarity being captured by the metric. As in our work, the construction
involves solving an optimization problem, however no technique is used to
reduce the number of constraints. 



\subsubsection*{Conclusion}


In this paper we have  developed a method to generate a mechanism for location privacy that combines the advantages of the geo-indistinguishability privacy guarantee of \cite{Andres:13:CCS} and the optimal mechanism of  \cite{Shokri:12:CCS}. 
Since linear optimization is computationally demanding, we have provided  a technique to reduce the total number of constraints in the linear program, based on the use of a spanning graph to approximate distances between locations, which allows a huge reduction on the number of constraints with only a small decrease in the utility. 
Finally, we have evaluated the proposed approach using traces from real users, and we have compared both the privacy and the running time of our mechanism with that of  \cite{Shokri:12:CCS}. It turns out that our mechanism offers better privacy guarantees when the side knowledge of the attacker is different from the distribution used to construct the mechanisms. Besides, for a reasonably good approximation factor, we have showed that our approach performs much better in terms of running time.

\section{Acknowledgements}
This work was partially supported by the MSR-INRIA joint lab, by the European Union 
7th FP project MEALS, by the project ANR-12-IS02-001 PACE, and by the 
INRIA Large Scale Initiative CAPPRIS. The work of Nicol\'as E. Bordenabe was partially funded by the DGA.


\bibliographystyle{splncs}
\bibliography{short}


\appendix

\section{Proofs}

\propprivacyimpl*
\begin{proof}
This proposition is a direct consequence of the property
\[
d_G(x,x') \leq \delta \dx(x,x') \quad \forall x, x' \in \calx
\]
and one of the results presented in \cite{Chatzikokolakis:13:PETS}, which states that if two metrics $\dx$ and $\dy$ are such that $\dx \leq \dy$ (point-wise), then \priv{\dx} implies \priv{\dy}.\qed
\end{proof}

\theoremap*
\begin{proof}
We know that 
\[
(KH)_{x\hat{x}} = \sum_{z\in \calx} k_{xz}h_{z\hat{x}}, \quad \forall x, \hat{x}\in\calx
\]
Since $K$ is \privadj{\edx}, we also know that
\[
k_{xz} \leq e^{\epsilon \dx(x,x')} k_{x'z}, \quad \forall x, x', z \in\calx
\]
Therefore, given $x, x'\in \calx$, it holds that for all $\hat{x}\in\calx$:
\begin{align*}
(KH)_{x\hat{x}} 	&= \sum_{z\in \calx} k_{xz}h_{z\hat{x}}\\
				&\leq \sum_{z\in \calx} e^{\epsilon \dx(x,x')} k_{x'z}h_{z\hat{x}}\\
				&= e^{\epsilon \dx(x,x')} \sum_{z\in \calx} k_{x'z}h_{z\hat{x}}\\
				&= e^{\epsilon \dx(x,x')} (KH)_{x'\hat{x}} 
\end{align*}
and therefore $KH$ is \privadj{\edx}.\qed
\end{proof}

\theodpopt*
\begin{proof}
Let $\da = \dq$. We recall from Section \ref{sec:rezas-mech} that for an arbitrary mechanism $M$, it holds that

\begin{align*}
\advarg{M}{\uprof}{\dq} 	&= \min_{H} \expdistarg{MH}{\uprof}{\dq}\\
					&= \min_{H} \sqlarg{MH}{\uprof}{\dq}
\end{align*}

which means that 

\[
\advarg{M}{\uprof}{\dq} \leq \sqlarg{M}{\uprof}{\dq}\tag{1}
\]

Let $K$ be a \dpopt{\uprof}{\dx}{\dq} mechanism. Suppose that

\[
\advarg{K}{\uprof}{\dq} < \sqlarg{K}{\uprof}{\dq}
\]

This means that there is a remapping $H$, other than the identity, such that 
\[
\sqlarg{KH}{\uprof}{\dq}\ < \sqlarg{K}{\uprof}{\dq}
\]

However, by Lemma \ref{theo:remap} we know that $KH$ is also \privadj{\dx}, and therefore, recalling Definition \ref{def:dx-opt}, $K$ would not be \dpopt{\uprof}{\dx}{\dq}, which is a contradiction.
Therefore, we can state that

\[
\advarg{K}{\uprof}{\dq} = \sqlarg{K}{\uprof}{\dq}\tag{2}
\]

Now, in order to see that $K$ is also \optpriv{\prior}{\dq}{\dq}{q}, with $q = \sqlarg{K}{\uprof}{\dq}$, let $K'$ be such that 

\[
\sqlarg{K'}{\uprof}{\dq}\ \leq \sqlarg{K}{\uprof}{\dq}\tag{3}
\]

According to Definition \ref{def:sql-opt} we need to prove that 

\[
\advarg{K'}{\uprof}{\dq} \leq \advarg{K}{\uprof}{\dq}
\]

And in fact we can see that

\begin{align*}
\advarg{K'}{\uprof}{\dq} 	&\leq \sqlarg{K'}{\uprof}{\dq}\tag{by (1)}\\
					&\leq \sqlarg{K}{\uprof}{\dq}\tag{by (3)}\\
					&= \advarg{K}{\uprof}{\dq}\tag{by (2)}
\end{align*}

which concludes our proof.\qed

\end{proof}

\section{Dual form of the optimization problem}
\label{app:dual}

\smallskip

In this section we  show the dual form of the optimization problem presented in Section \ref{sec:mechanism-spanner}. 
We recall that the original linear program is as follows:

\begin{align*}
	&\textbf{Minimize:}   &\quad \sum_{x, z \in \calx} \uprof_x k_{xz} \dq(x, z) \\
	\\
	&\textbf{Subject to:}\\
			&\quad k_{xz} \leq e^{\frac{\epsilon}{\delta} d_G(x,x')} k_{x'z} & z \in \calx, (x,x') \in E \tag{1}\\
			&\quad \sum_{x \in \calx} k_{xz} = 1&  x\in \calx \tag{2}\\
			&\quad k_{xz} \geq 0 & x, z \in \calx \\
\end{align*}

To obtain the dual form, we apply the standard technique of linear programming.  
\vfill\eject

First, for the dual program we need to consider one variable for each of the constraints in the original linear program that are not constraints on single variables. 
Therefore we have two sets of variables: 
\begin{itemize}
\item The variables of the form ${a_{xx'z}}$, with $z \in \calx, (x,x') \in E$, corresponding to the constraints in (1).
\item The variables of the form $b_x$, with $x \in \calx$, corresponding to the constraints in (2).
\end{itemize}

Again applying the standard technique,  we obtain the following  system of constraints and  objective function, that constitute the dual linear program:


\begin{align*}
	&\textbf{Maximize:}  \quad \sum_{x \in \calx} b_x \\
	&\textbf{Subject to:}\\
			&\quad b_x + \sum_{(x, x') \in E} ( e^{\edg(x, x')} a_{x'xz} - a_{xx'z} ) \leq \uprof_x d_{Q}(x,z),\quad x, z \in \calx \\
			&\quad a_{xx'z} \geq 0, \quad  z \in \calx, (x,x') \in E \\
\end{align*}

\end{document}